\begin{document}
\pagestyle{plain}
\title{Private Attacks in Longest Chain Proof-of-stake Protocols with Single Secret Leader Elections}

\author{Sarah Azouvi}
\affiliation{%
Protocol Labs
}
\author{Daniele Cappelletti}
\affiliation{%
Politecnico di Torino
}

\newcommand{\SSLE}[2]{P_{\text{SSLE}}^#1(#2)}
\newcommand{\PLE}[2]{P_{\text{PLE}}^#1(#2)}
\newcommand{\Bin}{\textsf{Bin}}
\newcommand{\var}{\textsf{Var}}
\newcommand{\E}{\mathbb{E}}
\newcommand{\gap}{\mathcal{G}}
           
\newcommand\sarah[1]{\textcolor{violet}{Sarah: #1}}
\begin{abstract}
Single Secret Leader Elections have recently been proposed
as an improved leader election mechanism for proof-of-stake (PoS) blockchains. However, the security gain they provide has not been quantified.
In this work, we present a comparison of PoS longest-chain protocols
that are based on Single Secret Leader Elections (SSLE) -- that elect exactly one leader per round -- versus those based on Probabilistic Leader Elections (PLE) -- where one leader is elected on expectation.
Our analysis shows that when considering the private attack -- the worst attack on longest-chain protocols~\cite{dembo2020everything} -- the security gained from using SSLE is substantial: the settlement time is decreased by $\sim25\%$ for a 33\% or 25\% adversary.
Furthermore, when considering grinding attacks, we find that the security threshold is increased by 10\% (from 0.26 in the PLE case to 0.36 in the SSLE case) and the settlement time is decreased by roughly $70\%$ for a 20\% adversary in the SSLE case.
\end{abstract}

\maketitle   
\section{Introduction}
Proof-of-stake has been proposed as a more energy-efficient alternative consensus protocol to proof-of-work for cryptocurrencies.
In proof-of-work, miners need to solve a computational puzzle in order to earn the right to create a block and receive the associated financial rewards.
The amount of blocks that they mine is, hence, proportional to their computational power.
In contrast, the idea behind PoS is that participants mine a fraction of blocks that is proportional to the relative amount of coins they own.
One crucial component of PoS consensus protocols is their \emph{leader election}~\cite{caucus}, used to decide which participants will get to create the next block.
Although leader election protocols have been studied widely in the traditional field of distributed systems, 
the setting of blockchains -- which are decentralized and provide financial rewards for block creation -- poses new challenges that are paramount to the security of the whole consensus protocol.
For example, leader election protocols must be fair, in the sense that miners must be elected proportionally to their power (stake or compute resources); this is to ensure they are compensated fairly for their investment and to prevent Sybil attacks.
They should also be private, i.e., no actor should be able to guess the next leader until they broadcast their block with a proof-of-eligibility; this prevents denial-of-service (DoS) attacks against the next leader.
Many attempts have been made to design an adequate leader election protocol, such as using hash functions~\cite{daian2019snow,caucus} or coin tossing protocols~\cite{kiayias2017ouroboros}.
Another method that has been adopted by many protocols~\cite{algorand,praos,nakamoto-pos,praos} is the use of verifiable random functions (VRFs)~\cite{micali1999verifiable}.

Most leader election protocols used in blockchains are private but probabilistic~\cite{algorand,praos,daian2019snow}, meaning that one leader will
privately be elected on expectation, but it could be that zero or several leaders are elected in some rounds. 
Probabilistic Leader Elections (PLEs) can be problematic since having multiple leaders elected in a round leads to different views or \emph{forks} in the system.
Ouroboros~\cite{kiayias2017ouroboros} proposed a leader election
where exactly one leader is elected per round. However, the election is not private and exposes the leader to DoS attacks.

To solve these problems, Single Secret Leader Election (SSLE), where \emph{exactly} one leader is privately elected at each round, has been proposed in~\cite{ssle,catalanoefficient}.
Although current SSLEs are more complex to implement than, for example, PLEs based on VRFs, they intuitively improve the security of PoS blockchains because they reduce the probability of honest forks.
However, there is no formal proof of this statement and, if true, the exact gain in security that they achieve, compared to PLEs, has yet to be quantified.

\paragraph{Our contribution}
In this work, we perform a comparison between SSLEs and PLEs and investigate the gain in security against private attacks, where the adversary grows a private chain of blocks so as to outpace the honest chain.
Focusing our analysis on private attacks, and not general adversaries, is motivated by the work of Dembo et al.~\cite{dembo2020everything}, who showed that private attacks are the worst attack in longest-chain blockchains, in the sense that the true security threshold of a longest-chain protocol is the same as the security threshold against private attacks.

We start our analysis by considering leader elections that have access to a perfect source of randomness in Section~\ref{sec:analysis}, before considering randomness
that is derived from the blockchain itself -- as is the case for many
PoS longest-chain protocols~\cite{praos,nakamoto-pos,praos} --
and the resulting grinding attack in Section~\ref{sec:grinding}.
We find that, with perfect randomness, the persistence parameter (or settlement time) against private attacks in a synchronous network is decreased by roughly $25\%$ against a 33\% or 25\% adversary. In the grinding case, the security threshold is higher by 10 percentage points in the SSLE case (36\%) than
in the PLE case (26\%) and the persistence parameter is decreased by roughly 70\%.
Although it is not surprising that SSLEs perform better than PLEs in longest-chain blockchains, we did not expect to see such significant improvements.

These results are very encouraging and
could motivate the switch from PLE
to SSLE in current PoS blockchains.
Since SSLEs are still less efficient than PLEs,  quantifying the gain in security is paramount for evaluating fairly the trade-offs between the two.
We leave a full analysis of PoS with SSLE in a partially synchronous network and against all possible attacks as future work. Whether with SSLE or PLE, obtaining an exact formula for the security of the protocol against any adversary (i.e., the probability of breaking the consensus) is an open and non-trivial problem. All the known analyses that consider a general adversary are based on bounds that would not be useful for our comparison, and hence a general adversary is beyond the scope of this work.

Lastly, we note that our analysis does not make use of the secret property of the SSLE (that protects against DoS attacks), and thus the results would hold for a public single leader election as well.

\section{Related work}\label{sec:related}
Previous works have proposed formal analyses of PoS blockchains based on PLEs~\cite{linearconsistency,nakamoto-pos,posat}.
In most of them, the proof revolves around
finding a \emph{special} block that guarantees
security of the protocol and bounding the probability that this type of block does not appear in a sequence of $n$ consecutive blocks.
For example, Kiayias et al.~\cite{kiayias2020consistency} consider \emph{Catalan} blocks, Bagaria et al.~\cite{nakamoto-pos} and Deb et al.~\cite{posat} consider Nakamoto blocks, Daian et al.~\cite{daian2019snow} consider \emph{pivot} point.
In~\cite{nakamoto-pos}, the interpretation given for that special block
is that no private chain started by the adversary at any point in time before that block will be able to catch up with any chain that includes that block at any point in the future. They show that any attack can be modelled as a composition of private attacks.

Dembo et al.~\cite{dembo2020everything}
considered three different types of longest-chain blockchain protocols: proof-of-work, proof-of-stake, and proof-of-space, and showed that in all three cases the true security threshold of the protocol (i.e., the security threshold when considering all attacks) is exactly the same as the security threshold of the private attack.
They also prove that the private attack is the worst attack in the synchronous proof-of-work case.
In the case of PoS protocols, based on the fact that other attacks do not appear in the security threshold, they conjecture that these attacks should have at least the same exponent as the private attack.
Even though this conjecture is left open, we decide in this paper to focus on the private attack and leave a full analysis as future work.
While this limits the generality of our work, we believe that, due to the conjecture above, studying the private attack alone gives a good proxy for the security of PoS longest-chain protocols.
Furthermore, this simplifying assumption allows us to get exact bounds, as opposed to the loose bounds obtained by other analyses~\cite{daian2019snow,nakamoto-pos,posat,linearconsistency}, usually based on Chebyshev's inequality or similar inequalities.
Loose bounds on the probability of successful attacks would not be sufficient to meaningfully quantify the difference between SSLE and PLE in longest-chain PoS blockchains as even if one bound is smaller than the other, this does not say anything about how the exact probabilities compare.
Simplifying our model to only consider private attacks alleviates this problem as it allows us to get closed-form solutions that we are meaningfully able to compare.

\section{Background, Model, and Definitions}\label{sec:model}
We start by giving some background on PoS blockchains before presenting our model.
Because we limit our adversary to private attacks, 
we will consider a rather simplistic model and abstract away most of the blockchain concepts.

\subsection{Proof-of-stake blockchains}
A blockchain is a digital distributed ledger of transactions. Transactions are grouped into \emph{blocks} that are chronologically ordered in a linked chain.
In a PoS system, participants -- also called miners or validators--
are eligible to create blocks based on their relative stake, as measured from the number of coins that they own.
In longest-chain protocols, eligible miners create their block and append it to the longest chain of blocks that they are aware of, i.e., the new block should link to the block atop the longest existing chain of blocks.

Briefly, the protocol works as follows. Whenever miners receive a block, they add it to their local view.
At each round, they run a leader election protocol that randomly decides their eligibility for that round, proportionally to their stake. 
For now we assume access to a perfect \emph{random beacon}~\cite{ben1985collective} that emits a random number at the beginning of each round. This beacon is then given as an input to the leader election. 
If they are elected in that round, miners create a block on top of their longest chain and broadcast their block to the network.
In the case where there exist two chains of the same length, e.g., if more than one leader was elected in the previous round, miners break the tie in a random way.
In our model, we will ignore the content of the blocks as well as associated reward or financial incentives.


Informally, a blockchain should verify two security properties~\cite{dembo2020everything}:
(1) \emph{persistence}, meaning that 
once a block has been confirmed by an honest miner, it should stay in the chain of that miner indefinitely i.e., it is not possible for an adversary to revert the chain of an honest miner, except 
for the last $k$ blocks; and 
(2) \emph{liveness}, meaning that new blocks should be appended to the chain continually, even in the presence of an adversary.

One important parameter in longest-chain blockchains is the 
\emph{persistence parameter} -- or settlement time -- $k$, which, informally, represents the number of rounds after which an honest miner will consider a block confirmed, i.e., after which it will stay in the chain forever.


\subsection{Private attacks}
Private attacks are a specific type of attack on
longest-chain blockchains, in which
an adversary keeps its blocks private (instead of sending them to the rest of the miners) and does not mine on other participants' blocks. In other words, the adversary is creating its own chain, parallel to the honest chain.
The adversary succeeds if it can create a private chain longer than the honest chain. If that is the case, it will broadcast its chain to the rest of the players, forcing them to abandon their own chain and mine on top of the adversarial chain.
It is clear that this attack will almost surely be successful if the adversary can mine blocks at a rate faster than the rest of the players.
In PoS blockchains, this rate is proportional to the relative amount of stake that one owns. Hence this attack will succeed if the adversary owns more than half of the total stake.
If not, then there exists a time after which
this attack becomes very unlikely to succeed. 

\subsection{Model}

\subsubsection{Assumptions}
We consider a set of $N$ participants, a fraction $\alpha$
of which are controlled by an adversary $\mathcal{A}$ performing a private attack as specified above. The remaining participants are 
\emph{honest} and follow the protocol.
Time is divided into discrete time-steps.
We assume that we have access to a broadcast algorithm and we consider a synchronous model meaning that each message, i.e., block, sent by an honest participant reaches everyone after
at most $\Delta$ time steps with $\Delta>0$.
We consider a round-based protocol that relies on an underlying leader election mechanism as defined below.
We assume that the duration of each round is strictly more than
$\Delta$, such that any message sent by any honest participant at the beginning of a round will be received before the end of that round.
We assume that we have access to a perfect random beacon that emits new randomness at the beginning of each round. We will relax this assumption in Section~\ref{sec:grinding}.


For simplicity, we consider a flat model, meaning that each miner accounts for one unit of stake and that the set of participants is static.

\subsubsection{Leader Election}

Every PoS protocol uses a leader election to decide which miners are eligible to create blocks.
In this paper, we treat it as a black box algorithm that takes as input a random number
$r$ and a set of participants and outputs a (potentially empty) set of leaders.
We consider two types of leader election:
Single Secret Leader Election (SSLE)~\cite{ssle}, where \emph{exactly} one leader is elected per round, and Probabilistic Leader Election (PLE)~\cite{algorand}, where one leader is elected per round \emph{on expectation}. This means that, when using a PLE, there could be multiple leaders or no leader at all in a round.
We denote $a_n$ and $h_n$ the number of adversarial and honest leaders elected at round $n$, respectively.

 We model the PLE case as follows: at each round, every player ``tosses their own coin'' (using the randomness given by the random beacon) to determine their eligibility; each player wins with probability $1/N$. 
In practice, this is achieved using a verifiable random function~\cite{micali1999verifiable}: each player uses the VRF to compute their own random number; if it falls below a threshold, they are elected leader. 
An adversary controlling a fraction $\alpha$ of the players will thus have a number of leaders that follows a Binomial distribution with $N\times\alpha$ trials and success probability $1/N$, seeing as they get to toss a coin for each of the players they control.
We assume that the number of participants $N$ is big and hence the number of adversarial leaders
elected can be approximated as a
Poisson distribution with parameter $\alpha$.
Similarly, the number of honest participants elected in a round follows a Poisson distribution with parameter $1-\alpha$.
Furthermore, the number of adversarial and honest leaders are independent from each other.

In the SSLE case, exactly one of the players is elected, hence the number of adversarial leaders follows a Bernoulli distribution with parameter $\alpha$. However, the number of adversarial and honest leaders are not independent. In particular, the number of honest leaders is the complement of the number of adversarial leaders, i.e., $h_n = 1-a_n$ for every $n\in\mathbb{N^*}$.

In this paper, where we consider a static adversary that performs a private attack, we ignore some
of the practical requirements for leader elections in PoS blockchains (such as unpredictability and secrecy) that are not
relevant to the model.

\subsubsection{Security games}
We consider an adversary mounting a private attack against the honest players.
Accordingly, we define the following games that capture whether or not the adversary succeeds in a private attack according to the assumptions above.

\begin{definition}[$(L,\alpha)$-PLE Private Game]
The PLE private game with parameters $(L,\alpha)$ is defined as follows:
at each round $n\in [1,\dots,L ]$
a number $a_n$ of adversarial leaders and
$h_n$  of honest leaders are selected at random from, respectively,
Poisson distributions of parameters $\alpha$ and $1-\alpha$.
We say that the adversary wins the PLE private game of length $L$ and power $\alpha$ if the number of rounds with non-zero adversarial leaders is greater than or equal to the number of rounds with non-zero honest leaders, i.e.:
$$|\{n \in [1,\dots,L ]: a_n>0\}|\ge |\{n \in [1,\dots,L ]: h_n>0\}|.$$

\end{definition}

\begin{definition}[$(L,\alpha)$-SSLE Private Game]
The SSLE private game  with parameters $(L,\alpha)$ is defined as follows:
at each round $n\in [1,\dots,L ]$,
exactly one leader is elected. This leader is adversarial ($a_n=1,h_n=0$) with probability $\alpha$ and honest ($a_n=0,h_n=1$) with probability $1-\alpha$.
We say that the adversary wins the SSLE private game of length $L$ and power $\alpha$ if the number of rounds with  adversarial leaders is greater than or equal to the number of rounds with honest leaders, i.e.:
$$|\{n \in [1,\dots,L ]: a_n=1\}|\ge |\{n \in [1,\dots,L ]: h_n=1\}|.$$
\end{definition}


We now define the persistence parameter $n_0$ of the games, parametric in $0<\epsilon<1$, that intuitively represents the number of rounds after which the adversary cannot win the private game, except with probability $\epsilon$.

\begin{definition}[$\epsilon-$persistence parameter]
We say that $n_0$ is the $\epsilon-$persistence parameter of the SSLE, resp. PLE, private game if the probability that there exists any $n\ge n_0$ such that the adversary wins the SSLE, resp. PLE, game of length $n$ is $\epsilon$.

\end{definition}

In order to study the private games, we define the concept of gap, already introduced by Blum et al.~\cite{linearconsistency}.

\begin{definition}[Gap]
The gap at round $n\in[1,\dots,L]$ is the difference between the number of adversarial rounds and honest rounds in rounds 1 to $n$.
Let
$\gap_n^{SSLE}(\alpha)$ and $\gap_n^{PLE}(\alpha)$ denote, respectively, the gap in the PLE and SSLE private games of parameters $(L,\alpha)$. For $n\in[1,\dots,L]$, we have:
\[\gap_n^{SSLE}(\alpha) = |\{i \in [1,\dots,n ]: a_i=1\}|- |\{n \in [1,\dots,n ]: h_i=1\}|
\]
\[\gap_n^{PLE}(\alpha) = |\{i \in [1,\dots,n ]: a_i>0\}|- |\{n \in [1,\dots,n ]: h_i>0\}|
\]
\end{definition}
If we consider a general analysis that applies to both settings (SSLE and PLE), we simply write $\gap_n$ and talk about the private game.
It is clear that the adversary wins the
private game of length $n$ if and only if $\gap_n\ge 0$.
In the next section, we will study the behaviour of the gap. We are specifically interested in the probability that the adversary wins the PLE and SSLE games for any $n\ge n_0$,  $n_0\in\mathbb{N}$.

Before this, we briefly explain why the PLE and SSLE private games are an accurate description of the private attack in PoS systems.
In the SSLE case, since there is exactly one leader per round and the network is synchronous, it is clear that the honest chain will
be exactly the same length as the number of honest rounds and the adversarial chain will be at most the same length as the number of adversarial rounds.

In the PLE case, however, there could be honest forks due to multiple honest leaders being elected in the same round. 
If that happens, and since we consider a synchronous network, the longest chain will still increase by one even if some of the blocks at that round are being abandoned. Even in the worst case where there are multiple longest chains for several rounds, each of them will still be as long as the number of honest rounds.
Similarly, in the adversarial case, even if the adversary has more than one block on one round, it can only append one block per round and, hence, its longest chain is bounded by the number of eligible rounds.

\section{Analysis}\label{sec:analysis}
In this section we prove our main theorems, Theorem~\ref{thm:ssle} and~\ref{thm:ple}, where we express for $n_0\in\mathbb{N}^*$ the probability that an adversary succeeds in winning the private game for any length greater than or equal to $n_0$. 
This probability corresponds to the value $\epsilon$ for the corresponding $\epsilon-$persistence parameter $n_0$. 
We will then compare the $\epsilon-$persistence parameter in the SSLE and PLE cases.

\paragraph{SSLE and PLE games as biased random walks}
In the SSLE game, it is straightforward to see that the gap will increase by one with probability $\alpha$ and decrease by one with probability $1-\alpha$.

In the PLE game, there are two events in which the gap will not change.
The first event is when
no leader is elected. The second event is when
an honest leader is elected 
at the same round as an adversarial leader. 
We call these events \emph{null events} and denote $p_0$ the probability that they happen.

On the other hand having multiple adversarial leaders and no honest leader is equivalent to having exactly one adversarial leader since the gap will grow by one at that round regardless of the exact number of adversarial leaders and vice versa for honest leaders. We note $p_a= \Pr[a_n>0\text{ and }h_n=0]$ the probability that an adversary is the unique leader in a round - which does not depend on $n$ - and $p_h=\Pr[a_n=0\text{ and }h_n>0]$ the probability that the honest players are unique leaders.
The PLE game can be modeled as a random walk that increases by one with probability $p_a$, decreases by one with probability $p_h$ and stays the same with probability $p_0=1-p_a-p_h$.
Since $a_n$ and $h_n$ are independent, we have:
\begin{align*}
   p_a &=  \Pr[h_n =0 ]\times \Pr[a_n\ge 1] \\
    &=e^{\alpha-1}(1-e^{-\alpha})\\
    &= e^{\alpha-1}-e^{-1}
\end{align*}
And similarly:
$p_h=e^{-\alpha}-e^{-1}$.

We now move on to prove our first lemma.
For the rest of the paper, for $p\in (0,1)$ and $n\in\mathbb{N}$, we note 

\[
\Bin(p,n,k) = \left\{\begin{array}{ll}
       \binom{n}{k}p^k(1-p)^{n-k}, &  \text{ for } k\in\mathbb{N}\; \\
        0, & \text{for } k\in\mathbb{R}\setminus\mathbb{N} 
        \end{array}\right.
\]

\begin{lemma} For every $(n,v)\in\mathbb{N}^2$ and $\alpha\in \; (0,1)$:
$$\Pr[\gap_n^{SSLE}(\alpha)=v] = \Bin(\alpha,n,\frac{1}{2}(n+v)). $$
\end{lemma}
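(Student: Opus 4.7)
The plan is to reduce the statement to a direct computation of the distribution of a Binomial random variable, using the fact that in the SSLE game the gap is a simple affine function of the number of adversarial leaders.

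First I would set up notation. Let $A_n = |\{i \in [1,\dots,n]: a_i = 1\}|$ denote the number of adversarial rounds among the first $n$ rounds, and let $H_n = |\{i \in [1,\dots,n]: h_i = 1\}|$ denote the number of honest rounds. By definition, $\gap_n^{SSLE}(\alpha) = A_n - H_n$.

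The key observation, which drives the whole proof, is that in the SSLE game exactly one leader is elected per round, so $a_i + h_i = 1$ for every $i$. Summing over $i = 1, \dots, n$ gives $A_n + H_n = n$, hence $H_n = n - A_n$ and therefore
\[
\gap_n^{SSLE}(\alpha) = A_n - (n - A_n) = 2A_n - n.
\]
Consequently, the event $\{\gap_n^{SSLE}(\alpha) = v\}$ coincides with $\{A_n = (n+v)/2\}$. Since the leader in each round is adversarial independently with probability $\alpha$, the random variable $A_n$ is $\text{Binomial}(n, \alpha)$-distributed, so
\[
\Pr[A_n = k] = \binom{n}{k} \alpha^k (1-\alpha)^{n-k} = \Bin(\alpha, n, k)
\]
for every integer $k$. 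Substituting $k = (n+v)/2$ yields the claimed formula.

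The only subtle point, and the one I would flag carefully, is the parity constraint. The quantity $(n+v)/2$ is an integer if and only if $n$ and $v$ have the same parity; otherwise the event $\{\gap_n^{SSLE}(\alpha) = v\}$ is impossible (because $\gap_n^{SSLE}(\alpha) = 2A_n - n$ always has the same parity as $n$). This is precisely matched by the definition of $\Bin(p, n, k)$ given in the paper, which returns $0$ for non-integer $k$, so no case distinction is actually needed in the final expression. This is more an observation than an obstacle; the proof itself is a one-line identification of $\gap_n^{SSLE}(\alpha)$ with a shifted Binomial random variable.
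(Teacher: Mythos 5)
Your proof is correct and is essentially the paper's own argument in slightly different clothing: the paper views $\gap_n^{SSLE}(\alpha)$ as a $\pm 1$ random walk and solves $u-d=v$, $u+d=n$ to count the $\binom{n}{u}$ paths with $u=\frac{1}{2}(n+v)$ up-steps, which is exactly your identification $\gap_n^{SSLE}(\alpha)=2A_n-n$ with $A_n$ Binomial$(n,\alpha)$. Your explicit remark on the parity constraint (handled by the convention that $\Bin(p,n,k)=0$ for non-integer $k$) is a nice touch the paper leaves implicit, but it does not change the substance.
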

\begin{proof}
We already noted that 
$(\gap_n^{SSLE}(\alpha))_{n\in\mathbb{N}}$ is a random walk such that 
\[\gap_{n+1}^{SSLE}(\alpha)= \left\{\begin{array}{ll}
        \gap_n^{SSLE}(\alpha)+1, & \text{ with probability }\alpha\\
        \gap_n^{SSLE}(\alpha)-1, & \text{ with probability }1-\alpha
        \end{array}\right.
\]
The proof of the lemma follows from standard results on random walks and can be found in~\cite{proba-random-walk}.
Briefly, 
for $n\in\mathbb{N}$:  $-n\le\gap_n^{SSLE}(\alpha)\le n$.
We note $u$ the number of times that $\gap_n^{SSLE}(\alpha)$ increased by one and $d$ the number of times that $\gap_n^{SSLE}(\alpha)$ decreased by one. 
If $\gap_n^{SSLE}(\alpha)=v$ for $v\in [-n,n]$,
we have $u-d = v$ and $u+d = n$ hence $u=\frac{1}{2}(v+n)$. There are exactly $\binom{n}{u}$
different ways to reach $v$, starting from $0$ and hence
$\Pr[\gap_n^{SSLE}(\alpha)=v]=\binom{n}{u}\alpha^u(1-\alpha)^{n-u}$.
\end{proof}
We now look at the equivalent lemma, in the PLE case.
\begin{lemma} For every $(n,v)\in\mathbb{N}^2$ and $\alpha\in \; (0,1)$:
\[\Pr[\gap^{PLE}_n(\alpha)=v]=\sum_{l=0}^{n-v} \Bin(p_0,n,l)\Pr[\gap^{SSLE}_{n-l}(\frac{p_a}{1-p_0})=v]
\]
\end{lemma}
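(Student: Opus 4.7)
The plan is to condition on the number of null rounds. Recall that in each round of the PLE game, independently, one of three mutually exclusive events occurs: a null event (gap unchanged) with probability $p_0$, a round in which the adversary is the unique leader (gap $+1$) with probability $p_a$, or a round in which the honest players are the unique leaders (gap $-1$) with probability $p_h = 1 - p_0 - p_a$. So if $L$ denotes the number of null rounds among the first $n$, then $L \sim \text{Bin}(p_0, n)$, i.e.\ $\Pr[L = l] = \Bin(p_0, n, l)$.

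Next, I would condition on $L = l$. Since the rounds are i.i.d., conditionally on a given round being non-null the round is adversarial-unique with probability $p_a/(1-p_0)$ and honest-unique with probability $p_h/(1-p_0)$, independently across rounds. Null rounds contribute $0$ to the gap, so $\gap^{PLE}_n(\alpha)$ is determined entirely by the $n-l$ non-null rounds. Those $n-l$ non-null rounds form an i.i.d.\ sequence of $\pm 1$ steps with success probability $p_a/(1-p_0)$, which is exactly the increment distribution of the SSLE walk with parameter $p_a/(1-p_0)$. Hence
\[
\Pr[\gap^{PLE}_n(\alpha) = v \mid L = l] \;=\; \Pr\!\left[\gap^{SSLE}_{n-l}\!\left(\tfrac{p_a}{1-p_0}\right) = v\right].
\]

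Applying the law of total probability and summing over $l$ gives
\[
\Pr[\gap^{PLE}_n(\alpha) = v] \;=\; \sum_{l=0}^{n} \Bin(p_0, n, l)\,\Pr\!\left[\gap^{SSLE}_{n-l}\!\left(\tfrac{p_a}{1-p_0}\right) = v\right],
\]
which is the stated identity. The upper limit can be tightened from $n$ to $n-v$ because, by the previous lemma, the SSLE gap after $m$ steps can equal $v$ only when $m \geq v$ and $m+v$ is even; the $\Bin$ convention makes the other terms vanish, so restricting $l$ to $0,\dots,n-v$ loses nothing.

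The only genuinely nontrivial point is justifying the conditional independence used in the second step: that after conditioning on \emph{which} $l$ of the $n$ rounds are null, the remaining non-null rounds are i.i.d.\ with the claimed $\pm 1$ distribution. This follows immediately from the independence of the per-round outcomes in the PLE game together with the elementary fact that, for a sequence of i.i.d.\ categorical variables, conditioning on the positions of one category leaves the remaining positions i.i.d.\ with the renormalized distribution on the remaining categories. Once this is in place, the computation is purely algebraic.
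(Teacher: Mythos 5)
Your proof is correct and follows essentially the same route as the paper's: condition on the number $l$ of null rounds, observe that conditionally on $L=l$ the $n-l$ non-null rounds form exactly an SSLE walk with parameter $p_a/(1-p_0)$, and conclude by the law of total probability. In fact your computation $\Pr[L=l]=\Bin(p_0,n,l)$ is the correct one and matches the lemma statement and its later use in Theorem~\ref{thm:ple} --- the paper's own proof contains a typo writing $\Bin(p_0,n-v,l)$ at this step --- and your justification for truncating the sum at $l=n-v$ (the SSLE gap after $n-l<v$ steps cannot equal $v$, and the $\Bin$ convention kills the remaining terms) makes explicit a point the paper leaves implicit.
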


\begin{proof}

If $\gap_n^{PLE}(\alpha) = v$, there can be between 0 and $n-v$ null slots (i.e., slots where the gap does not change from the previous step); hence, we have:
\begin{align*}
\Pr[\gap_n^{PLE}(\alpha)=v]&=\sum_{l=0}^{n-v}\Pr[\gap_n^{PLE}(\alpha)=v|\text{\# null events} = l]
\times \\
&\Pr[\text{\# null events} = l]\end{align*}

We start by assuming that there exist exactly $l$ null events in the PLE game.
After removing the $l$ null events from the PLE game, the remaining $n-l$ slots are either fully adversarial or fully  honest; this is equivalent to an SSLE game of length $n-l$.
The power of the adversary in this new SSLE game needs to be adjusted, accounting for the fact that null events have been removed.
Hence the equivalent SSLE game has parameters $n-l$ and $\frac{p_a}{1-p_0}$
due to the independence of each round. 
 We thus have the following:

\begin{align*}
    \Pr[\gap_n^{PLE}(\alpha)=v]=\sum_{l=0}^{n-v} \Pr[\gap^{SSLE}_{n-l}({\frac{p_a}{1-p_0}})=v]\times\Pr[\text{\# null events} = l]
\end{align*}
Following standard results on Binomial distribution we have that $\Pr[\text{\# null events} = l]=\Bin(p_0,n-v,l)$, hence:
\begin{align*}
 \Pr[\gap_n^{PLE}(\alpha)=v]=\sum_{l=0}^{n-v} \Pr[\gap^{SSLE}_{n-l}({\frac{p_a}{1-p_0}})=v]\times
    \Bin(p_0,n-v,l).
\end{align*}
This result can also be derived using the multinomial distribution.
\end{proof}

Having computed the probability that the gap is equal to some value $v$, and in order to study the persistence parameter, we are also interested in the probability that the gap, starting at some negative value $-M$, goes back up to zero.
In blockchain terms, if the adversarial chain is behind the honest chain by $M$ blocks, we compute the probability that it eventually catches back.

\begin{lemma}
For $\alpha<1/2$, if in round $n_0\in\mathbb{N}$, $\gap_{n_0} =- M$ for $M>0$, then the probabilities $r_M$ that $\gap_n$ ever reaches $0$ for any $n\ge n_0$ in the SSLE and PLE cases are:
\[
r_M^{SSLE} =\left( \frac{\alpha}{1-\alpha}\right)^M;\; r_M^{PLE} =\left( \frac{e^{\alpha}-1}{e^{1-\alpha}-1}\right)^M
\]
\end{lemma}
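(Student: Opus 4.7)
The plan is to reduce both cases to the classical gambler's-ruin computation for a biased random walk by first noting that the null events in the PLE game are irrelevant to hitting probabilities. Since a null step leaves the gap unchanged, the probability that $\gap_n$ ever reaches $0$ starting from $-M$ is the same as that of the embedded non-lazy walk obtained by discarding null rounds. This embedded walk goes up by one with probability $p_a/(p_a+p_h)$ and down by one with probability $p_h/(p_a+p_h)$, so it is structurally identical to the SSLE walk (which goes up by $1$ with probability $\alpha$ and down by $1$ with probability $1-\alpha$). Thus it suffices to prove the following single fact: for a biased $\pm 1$ random walk with up probability $p$ and down probability $q=1-p$, $p<q$, the probability of ever reaching $0$ from $-M$ is $(p/q)^M$.

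For this, I would define $\rho$ to be the probability of ever reaching $0$ starting from $-1$, and use the strong Markov property to obtain the self-referential equation
\[
\rho \;=\; p \;+\; q\,\rho^{2},
\]
where the first term corresponds to the walk going directly up to $0$ and the second corresponds to first going down to $-2$ (after which two independent up-crossings of size one are needed). This quadratic $q\rho^2-\rho+p=0$ has roots $\rho=1$ and $\rho=p/q$. Since $p<q$, the walk has strictly negative drift, so by the strong law of large numbers it drifts to $-\infty$ almost surely and in particular $\rho<1$, forcing $\rho=p/q$. Applying the strong Markov property again, the probability of climbing from $-M$ to $0$ is the probability of $M$ successive $+1$-level crossings, each independent and each with probability $\rho$, so $r_M=\rho^M=(p/q)^M$.

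Plugging $p=\alpha$, $q=1-\alpha$ immediately gives $r_M^{SSLE}=(\alpha/(1-\alpha))^M$. For the PLE case, I would substitute the values of $p_a$ and $p_h$ computed earlier in the section, factoring $e^{-1}$ out of each:
\[
p_a \;=\; e^{-1}(e^{\alpha}-1), \qquad p_h \;=\; e^{-1}(e^{1-\alpha}-1),
\]
so that $p_a/p_h=(e^{\alpha}-1)/(e^{1-\alpha}-1)$ and $r_M^{PLE}=((e^{\alpha}-1)/(e^{1-\alpha}-1))^M$. One should also verify that $\alpha<1/2$ indeed implies $p_a<p_h$, which follows from the monotonicity of $x\mapsto e^x-1$ together with $\alpha<1-\alpha$; this guarantees the embedded walk has negative drift so that the correct root is the nontrivial one.

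The only mildly subtle step is the justification that the lazy walk and the embedded walk have the same hitting probabilities; this is essentially the observation that $p_0<1$ implies the walk almost surely takes infinitely many non-null steps, so the hitting event is determined by the embedded sequence. Everything else is a direct invocation of standard random walk theory as used in, e.g., \cite{proba-random-walk}.
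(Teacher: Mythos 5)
Your proposal is correct and takes essentially the same approach as the paper: the paper derives the identical first-passage quadratic, writing $r = p_a + p_h r^2 + p_0 r$ for the lazy walk directly (where the $p_0 r$ term cancels algebraically, exactly matching your embedded-walk reduction), selects the root $p_a/p_h$ via the negative-drift argument for $\alpha<1/2$, and composes $M$ independent unit level-crossings to get $r_M = (p_a/p_h)^M$, handling SSLE as the special case $p_0=0$, $p_a=\alpha$, $p_h=1-\alpha$. Your explicit justifications of the embedded-walk equivalence and of $p_a<p_h$ are minor additions that the paper leaves implicit, not a different method.
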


\begin{proof}
We start by considering 
the PLE case.
The probability $r_M$ that $\gap_n^{PLE}$ reaches 0 starting from a position $-M$ for $M>0$ is the same as the
probability that $\gap_n^{PLE}$ ever reaches M when starting at 0 (i.e., $\gap_n^{PLE}$ has a net increase of M). 

Let's note $r_1=r$. Then we have $r_M=r^M$ ($\gap_n^{PLE}$ needs to have $M$ net increase of 1).
Furthermore, we have $r=p_a+p_h r^2+p_0r$ since $\gap_n^{PLE}$ either increases straight away by one (with probability $p_a$),
or decreases by one (with probability $p_h$) in which case $\gap_n^{PLE}$ needs to increase by 2 to have a net increase of 1, or $\gap_n^{PLE}$ stays the same (with probability $p_0$) in which cases $\gap_n^{PLE}$ still needs to increase by 1.

We have  $r=p_a+p_hr^2+p_0r\Longleftrightarrow p_a+p_hr^2+(p_0-1)r=0$, with $p_a+p_h+p_0=1$. Hence, 
$r$ satisfies: $p_a+p_hr^2-(p_a+p_h)r=0$.
The solutions to this equation are $(1,p_a/p_h)$. Because $\alpha<1/2$, the random walk is transient with drift towards $-\infty$, hence $r<1$. This means that $r=p_a/p_h$
and $r_M =\left( \frac{p_a}{p_h}\right)^M=\left( \frac{e^{\alpha-1}-e^{-1}}{e^{-\alpha}-e^{-1}}\right)^M=\left( \frac{e^{\alpha}-1}{e^{1-\alpha}-1}\right)^M$.

The analysis in the SSLE case works the same but with $p_0 =0$, $p_a = \alpha$ and $p_h = 1-\alpha$. We then have
$r=\alpha+(1-\alpha) r^2$.
The two solutions of this equation are 1 and $\frac{\alpha}{1-\alpha}$.
Since $r<1$, we have: $r_M =\left( \frac{\alpha}{1-\alpha}\right)^M$.

\end{proof}
Interestingly, 
since $x\mapsto (e^x -1)(1-x)$ is increasing on $[0,1/2]$,
we notice that for $\alpha<1/2$, $r_M^{PLE}<r_M^{SSLE}$. Starting from $-M$, the adversary is, hence, more likely to catch up the honest chain in the SSLE case than in the PLE case.
However, 
this does not say anything about whether the adversary is more likely to win the private game of length $n$ as one process may be decreasing faster than the other.
We shall now compute the expected value of the gap in both cases to get a sense of their evolution.

Intuitively, since longest-chain protocols have been proven secure for an adversary that has less than half of the power~\cite{dembo2020everything}, the gap should decrease with time. The longer the chain is, the harder it is for an adversary to catch up with the honest chain and hence the bigger their disadvantage is, and hence their gap.
In the next lemma, we prove that the
expected gap is linear in $n$ and that the linear coefficient is bigger for the SSLE than PLE gap, and, therefore, the disadvantage of the adversary grows faster in the SSLE case, consistently with the intuition that SSLE is more secure.
Figure~\ref{fig:gap-coeff} show this coefficient for different values of $\alpha<1/2$.
\begin{lemma}
For every $n\in\mathbb{N}$ and $\alpha\in \; (0,1)$:
\begin{align*}
    \mathbb{E}[\gap_n^{SSLE}(\alpha)]&=(2\alpha-1)n \\
     \mathbb{E}[\gap_n^{PLE}(\alpha)]&=(e^{\alpha-1}-e^{-\alpha})n 
\end{align*}
\end{lemma}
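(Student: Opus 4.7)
The plan is to exploit the fact that both gap processes are, by the random-walk descriptions already established earlier in the paper, sums of $n$ independent and identically distributed increments. Linearity of expectation then reduces everything to computing the expected value of a single step.

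First I would write $\gap_n^{SSLE}(\alpha) = \sum_{i=1}^n X_i$, where each $X_i$ is $+1$ with probability $\alpha$ and $-1$ with probability $1-\alpha$, as stated immediately after the definition of the game. The expected single-step increment is $\mathbb{E}[X_i] = \alpha \cdot 1 + (1-\alpha) \cdot (-1) = 2\alpha - 1$, and linearity of expectation yields the first identity. For PLE, I would similarly write $\gap_n^{PLE}(\alpha) = \sum_{i=1}^n Y_i$, where each $Y_i$ is $+1$ with probability $p_a$, $-1$ with probability $p_h$, and $0$ with probability $p_0 = 1 - p_a - p_h$; these are i.i.d. across rounds since the pairs $(a_i, h_i)$ are i.i.d. by assumption. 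Thus $\mathbb{E}[Y_i] = p_a - p_h$.

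Plugging in the values $p_a = e^{\alpha-1} - e^{-1}$ and $p_h = e^{-\alpha} - e^{-1}$ computed earlier in the paper, the constants $-e^{-1}$ cancel and give $p_a - p_h = e^{\alpha-1} - e^{-\alpha}$, so $\mathbb{E}[\gap_n^{PLE}(\alpha)] = n(e^{\alpha-1} - e^{-\alpha})$ by linearity, which is the second identity.

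There is no real obstacle here; the only thing to be slightly careful about is that the computation really uses only linearity of expectation and not any independence, so the i.i.d.\ structure is invoked purely to identify the common single-step distribution rather than to manipulate a product. Everything else is a one-line arithmetic simplification of $p_a - p_h$.
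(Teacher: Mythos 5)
Your proof is correct, and it takes a mildly different route from the paper's. The paper does not sum i.i.d.\ increments directly: it defines a generic lazy random walk $(R_n)$ with up-probability $p$, down-probability $q$, observes that $X_n = R_n - (p-q)n$ is a martingale, and concludes $\mathbb{E}[R_n] = (p-q)n$ from the constancy of martingale expectations, then substitutes $p-q = 2\alpha-1$ (SSLE) and $p-q = p_a - p_h = e^{\alpha-1} - e^{-\alpha}$ (PLE), exactly the same final arithmetic as yours. The two arguments are close cousins --- verifying the martingale property is the same one-step drift computation you perform on each $X_i$ or $Y_i$ --- but your decomposition via linearity of expectation is the more elementary of the two, and your closing remark is a genuine refinement: linearity needs neither independence nor even identical distribution beyond pinning down each step's marginal law, whereas the paper's martingale framing, while also not needing full independence (only the conditional one-step drift), invokes machinery that is unnecessary if the expectation is all one wants. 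The paper's framing would pay off only if one went on to use optional stopping or concentration for the gap process, which this lemma does not.
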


\begin{proof}
Let $(R_n)_{n\in\mathbb{N}}$ denote a random walk that has a probability $p$ of going up by 1, a probability $q$ of going down by 1 and a probability $1-p-q$ of staying the same in each round.
Additionally we assume $R_0=0$.

It is trivial to verify that $X_n = R_n - (p-q)n$
is a martingale. 
Since martingales have constant expectations, we have $\mathbb{E}[X_n]=X_0 = 0$ and hence
$\mathbb{E}[R_n]=(p-q)n$.
We apply this result to $\gap_n^{SSLE}$ and $\gap_n^{PLE}$.
In the SSLE case, $p-q = \alpha - (1-\alpha)=2\alpha-1$ and in the
PLE case $p-q = p_a-p_h = e^{\alpha-1}-e^{-\alpha}$.
This proves the result. 
\end{proof}

\begin{figure}
     \centering
     \begin{subfigure}[b]{0.48\textwidth}
         \centering
    \includegraphics[width = \linewidth]{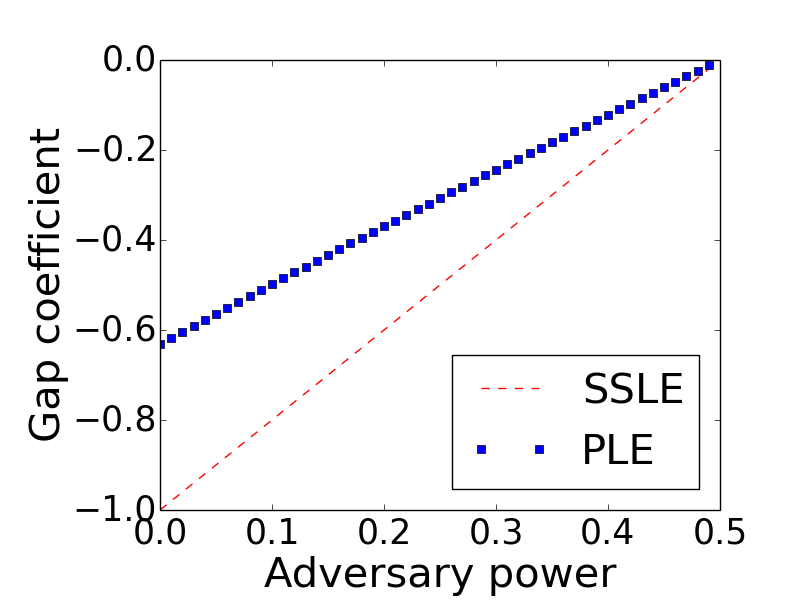}
    \caption{Linear  Coefficient for the Expected Gap}
    \label{fig:gap-coeff}
     \end{subfigure}
\end{figure}

We now move on to prove the main result that gives the probability of success of the adversary in the SSLE game.

\begin{theorem}\label{thm:ssle}
The probability that the adversary wins the SSLE game for any
length greater or equal than $n$ is:
{\small
  \setlength{\abovedisplayskip}{6pt}
  \setlength{\belowdisplayskip}{\abovedisplayskip}
  \setlength{\abovedisplayshortskip}{0pt}
  \setlength{\belowdisplayshortskip}{3pt}
\begin{align*}
\SSLE{\alpha}{n}=& \sum_{\substack{v=0,\\n+v\equiv 0 [2]}}^n \Bin(\alpha,n,\frac{1}{2}(n+v))
+\sum_{\substack{v=1,\\ n-v\equiv 0 [2]}}^{n} \Bin(1-\alpha,n,\frac{1}{2}(n-v))
\end{align*} }
\end{theorem}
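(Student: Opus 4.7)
The plan is to decompose the event "the adversary wins the SSLE game of some length $m \ge n$" according to the value of $\gap_n^{SSLE}(\alpha)$, then to invoke the two earlier lemmas (the exact distribution of $\gap_n^{SSLE}(\alpha)$, and the catch-up probability $r_M^{SSLE}$). Concretely, a winning trajectory of length $\ge n$ is either of type (i) $\gap_n^{SSLE}(\alpha) \ge 0$, in which case the adversary already wins at round $n$, or of type (ii) $\gap_n^{SSLE}(\alpha) = -v$ for some $v \ge 1$, in which case the adversary wins eventually iff the biased walk returns to $0$ starting from $-v$. These two cases are disjoint, and, because the SSLE walk has i.i.d.\ $\pm 1$ increments, the strong Markov property tells us that the conditional probability of eventually reaching $0$ given $\gap_n^{SSLE}(\alpha) = -v$ is exactly $r_v^{SSLE} = (\alpha/(1-\alpha))^v$, independently of the path used to reach $-v$.

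First, I would write
\begin{align*}
\SSLE{\alpha}{n} = \sum_{v \ge 0} \Pr[\gap_n^{SSLE}(\alpha) = v] \;+\; \sum_{v \ge 1} \Pr[\gap_n^{SSLE}(\alpha) = -v]\cdot r_v^{SSLE},
\end{align*}
restricting both sums to values of $v$ having the same parity as $n$, since $\gap_n^{SSLE}(\alpha)$ is the sum of $n$ steps in $\{-1,+1\}$ and hence satisfies $\gap_n^{SSLE}(\alpha) \equiv n \pmod{2}$. Next, I would substitute the exact distribution from the first lemma, $\Pr[\gap_n^{SSLE}(\alpha) = v] = \Bin(\alpha, n, \tfrac{1}{2}(n+v))$, into the first sum, which already produces the first term of the theorem verbatim.

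For the second sum, I would use the same distributional formula with $v$ replaced by $-v$, multiplied by $r_v^{SSLE}$, and then perform the key algebraic simplification
\begin{align*}
\binom{n}{(n-v)/2}\,\alpha^{(n-v)/2}(1-\alpha)^{(n+v)/2}\cdot\Bigl(\tfrac{\alpha}{1-\alpha}\Bigr)^{v}
= \binom{n}{(n-v)/2}\,(1-\alpha)^{(n-v)/2}\alpha^{(n+v)/2},
\end{align*}
which is precisely $\Bin(1-\alpha, n, \tfrac{1}{2}(n-v))$, matching the second sum in the statement. This ``swap'' of $\alpha$ and $1-\alpha$ is the small but pleasant observation that makes the final expression close up cleanly.

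The main obstacle, such as it is, is conceptual rather than computational: carefully justifying the Markov reduction in type (ii), so that the probability of ever winning from round $n$ onwards factorizes as $\Pr[\gap_n = -v] \cdot r_v^{SSLE}$. This is immediate from the i.i.d.\ structure of the SSLE game, but it must be invoked explicitly because without it one might worry about overcounting (e.g., a trajectory that dips below $0$, returns to $0$, dips again, etc.). A secondary point of care is the parity bookkeeping; stating the parity constraint $v \equiv n \pmod 2$ at the outset keeps the two sums clean and matches the indexing $n+v \equiv 0$ resp.\ $n-v \equiv 0$ in the theorem. Throughout we implicitly use $\alpha < 1/2$ so that the catch-up formula $r_v^{SSLE} = (\alpha/(1-\alpha))^v$ is valid; for $\alpha \ge 1/2$ the adversary wins with probability $1$ and the identity is vacuous.
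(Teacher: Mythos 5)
Your proposal is correct and follows essentially the same route as the paper's own proof: the same decomposition into $\gap_n^{SSLE}(\alpha)\ge 0$ versus $\gap_n^{SSLE}(\alpha)=-v$ with a subsequent catch-up, the same invocation of the gap-distribution lemma and the catch-up probability $r_v^{SSLE}=\left(\frac{\alpha}{1-\alpha}\right)^v$, and the same algebraic swap of $\alpha$ and $1-\alpha$ yielding $\Bin(1-\alpha,n,\frac{1}{2}(n-v))$. Your explicit mentions of the strong Markov justification, the parity constraint, and the implicit $\alpha<1/2$ assumption are sound refinements of details the paper leaves tacit, not a different argument.
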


\begin{proof}
For readability, in this proof we note $\gap_n$ instead of  $\gap_n^{SSLE}$. 
We start by noting that $-n\le\gap_n\le n$ for every $n\in\mathbb{N}$.
There are two scenarios where the adversary can have a private chain longer than the honest chain for a length of at least $n$: either the gap at $n$ is positive, or the gap is negative and the adversary catches up in the future (i.e., the gap reaches 0 in the future). This gives us the following:
\begin{align*}
    \SSLE{\alpha}{n} &= \Pr[\gap_n\ge 0] 
    +\Pr[\gap_n<0]\Pr[\gap_n \text{catches up}|\gap_n<0] \\
    &= \sum_{v=0}^{n}\Pr[\gap_n=v]
    +\sum_{v=1}^{n}\Pr[\gap_n=-v]r_{v}\\
    &=\sum_{\substack{v=0,\\n+v\equiv 0 [2]}}^n\binom{n}{\frac{1}{2}(n+v)}\alpha^{\frac{1}{2}(n+v)}(1-\alpha)^{\frac{1}{2}(n-v)} \\ & +
\sum_{\substack{v=1,\\ n-v\equiv 0 [2]}}^{n} \binom{n}{\frac{1}{2}(n-v)}\alpha^{\frac{1}{2}(n-v)}(1-\alpha)^{\frac{1}{2}(n+v)} \left(\frac{\alpha}{1-\alpha}\right)^{v}\\
&= \sum_{\substack{v=0,\\n+v\equiv 0 [2]}}^n\binom{n}{\frac{1}{2}(n+v)}\alpha^{\frac{1}{2}(n+v)}(1-\alpha)^{\frac{1}{2}(n-v)}\\ & +
\sum_{\substack{v=1,\\ n-v\equiv 0 [2]}}^{n} \binom{n}{\frac{1}{2}(n-v)}\alpha^{\frac{1}{2}(n+v)}(1-\alpha)^{\frac{1}{2}(n-v)} \\
&= \sum_{\substack{v=0,\\n+v\equiv 0 [2]}}^n \Bin(\alpha,n,\frac{1}{2}(n+v))
+\sum_{\substack{v=1,\\ n-v\equiv 0 [2]}}^{n} \Bin(1-\alpha,n,\frac{1}{2}(n-v))
\end{align*}
\end{proof}

We now move on to prove the result in the PLE case. 
\begin{theorem}\label{thm:ple}
The probability that the adversary wins the PLE private game for any length greater or equal than $n$ is:
\begin{align*}& \PLE{\alpha}{n} = \sum_{v=0}^n \sum_{l=0}^{n-v} \Bin(p_0,n,l)\Bin(\frac{p_a}{1-p_0},n-l,\frac{1}{2}(n-l+v)) \\
   &+ \sum_{v = 1}^{n}\sum_{l=0}^{n-v} \Bin(p_0,n,l)\Bin(\frac{p_a}{1-p_0},n-l,\frac{1}{2}(n-l-v))\left(\frac{e^\alpha-1}{e^{1-\alpha}-1}\right)^{v}
\end{align*}
\end{theorem}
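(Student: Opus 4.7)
The plan is to mirror the proof of Theorem~\ref{thm:ssle}: decompose the adversary's success event into the two disjoint cases (i) $\gap_n^{PLE}\ge 0$, and (ii) $\gap_n^{PLE}=-v$ for some $v\in\{1,\dots,n\}$ together with the PLE walk later returning to $0$. Since $-n\le \gap_n^{PLE}\le n$, this yields
\begin{align*}
\PLE{\alpha}{n}
= \sum_{v=0}^{n}\Pr[\gap_n^{PLE}=v]
+ \sum_{v=1}^{n}\Pr[\gap_n^{PLE}=-v]\,r_v^{PLE},
\end{align*}
where the catch-up probability $r_v^{PLE}=\bigl(\tfrac{e^{\alpha}-1}{e^{1-\alpha}-1}\bigr)^{v}$ is supplied by the previously established catch-up lemma.

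Next, into each $\Pr[\gap_n^{PLE}=\pm v]$ I would substitute the mixture representation already proven for the PLE gap, which conditions on the number $l$ of null slots and reduces everything to an SSLE gap probability of length $n-l$ with adversarial parameter $p_a/(1-p_0)$. The corresponding SSLE probability is in turn evaluated by the earlier SSLE gap lemma as a single binomial, $\Pr[\gap_{n-l}^{SSLE}(p_a/(1-p_0))=\pm v]=\Bin(p_a/(1-p_0),\,n-l,\,\tfrac{1}{2}(n-l\pm v))$. The parity constraints visible in the SSLE theorem are absorbed here by the convention that $\Bin(p,n,k)=0$ for $k\notin\mathbb{N}$, so no explicit congruence conditions appear in the final formula. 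Collecting the three substitutions yields precisely the two double sums stated in Theorem~\ref{thm:ple}.

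The work is largely bookkeeping: every ingredient is already in hand. The one step I would flag explicitly is the strong-Markov argument implicit in case (ii): after conditioning on $\gap_n^{PLE}=-v$, the increments in rounds $n+1,n+2,\dots$ are i.i.d.\ and independent of the past, so the catch-up lemma (originally stated starting from some round $n_0$) applies with $M=v$ and $n_0=n$. Once this is noted, the only remaining concern is tracking the summation limits to confirm that $l$ ranges over $\{0,\dots,n-v\}$ in both sums; in the negative-gap case this is automatic because $\Pr[\gap_{n-l}^{SSLE}=-v]$ already vanishes whenever $l>n-v$, so extending the index range would add only zero terms.
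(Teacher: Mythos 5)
Your proposal is correct and follows essentially the same route as the paper: the same decomposition into $\Pr[\gap_n\ge 0]$ plus $\sum_{v\ge 1}\Pr[\gap_n=-v]\,r_v$, followed by substituting the null-slot mixture lemma and the SSLE binomial gap lemma, with the catch-up probability from the earlier lemma. Your explicit remarks on the strong-Markov step and the vanishing of out-of-range terms are sound refinements of details the paper leaves implicit.
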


\begin{proof}

We use the same technique as in the previous theorem.

\begin{align*}
     &\PLE{\alpha}{n} = \Pr[\gap_n\ge 0] 
    +\Pr[\gap_n<0]\Pr[\gap_n \text{catches up}|\gap_n<0] \\
    &= \sum_{v=0}^{n}\Pr[\gap_n=v]
    +\sum_{v=1}^{n}\Pr[\gap_n=-v]r_{v}\\
   &= \sum_{v=0}^n \sum_{l=0}^{n-v} \Bin(p_0,n,l)\Pr[\gap^{SSLE}_{n-l}(\frac{p_a}{1-p_0})=v] \\
   &+ \sum_{v = 1}^{n}\sum_{l=0}^{n-v} \Bin(p_0,n,l)\Pr[\gap^{SSLE}_{n-l}(\frac{p_a}{1-p_0})=v]\left(\frac{e^\alpha-1}{e^{1-\alpha}-1}\right)^{v}\\
      &= \sum_{v=0}^n \sum_{l=0}^{n-v} \Bin(p_0,n,l)\Bin(\frac{p_a}{1-p_0},n-l,\frac{1}{2}(n-l+v)) \\
   &+ \sum_{v = 1}^{n}\sum_{l=0}^{n-v} \Bin(p_0,n,l)\Bin(\frac{p_a}{1-p_0},n-l,\frac{1}{2}(n-l-v))\left(\frac{e^\alpha-1}{e^{1-\alpha}-1}\right)^{v}\\
\end{align*}
\end{proof}

\subsubsection*{Results interpretation}

In order to compare these two probabilities, we plot them for different values of $n$ and $\alpha$.
In Figure~\ref{fig:persistence}, we see that, as expected, SSLE performs much better than PLE: the adversary wins the PLE game with higher probability than the SSLE game.
To cite a few concrete examples, for $n=$ 300 and a 33\% adversary, the probability of success drops from $10^{-7}$ to $10^{-9}$ (Figure~\ref{fig:33-adv}).
For $\alpha = 0.33$ and $\epsilon = 10^{-12}$ the persistence parameter is $n_0=400$ in the SSLE case and 550 in the PLE case.
For a blockchain where one block is emitted every 30 seconds, breaking persistence with probability $10^{-12}$ would roughly occur once every million years.

For all the values of $\alpha$ that we plotted, we see that, at least for $n$ big enough,
the probability of violating persistence is exponential (as the graph is linear in logarithmic scale), meaning that it is of the form $e^{-an}$ for $a\in\mathbb{R}^*_+$ in both the PLE and SSLE cases -- for different values of $a$ that we denote $a^{SSLE}$ and $a^{PLE}$.
We remark that this form is consistent with the bound found by Gazi et al.~\cite{gazi2020tight} and Li et al.~\cite{li2020close} in the context of Bitcoin
and is tighter than the one of $e^{-\Omega(\sqrt{n})}$ from Dembo et al.~\cite{dembo2020everything} which is based on a looser inequality (as they consider every attack possible).

For a fixed $\epsilon$ and large $n$, the persistence
parameter of the SSLE game, $n_0^{SSLE}$, can thus be expressed as follows:
$n_0^{SSLE}=\frac{a_{PLE}}{a_{SSLE}}n_0^{PLE}$, where $n_0^{PLE}$ is the $\epsilon$-persistence parameter of the PLE game.
Finding the value of $a_{PLE}$ and ${a_{SSLE}}$ is straightforward by computing a specific value of $\epsilon$ for some big enough $n$: $a= -\ln(\epsilon)/n$.
We find that the $\epsilon-$persistence parameter decreases by 
$17\%$ for a 49\% adversary, by roughly
$25\%$ for a 33\% or 25\% adversary, and by
$32\%$ for a 10\% adversary
in the SSLE case compared to the PLE case.
The improvement is substantial.

\begin{figure*}[h]
\centering
     \begin{subfigure}[b]{0.48\textwidth}
         \centering
  \includegraphics[width = \linewidth]{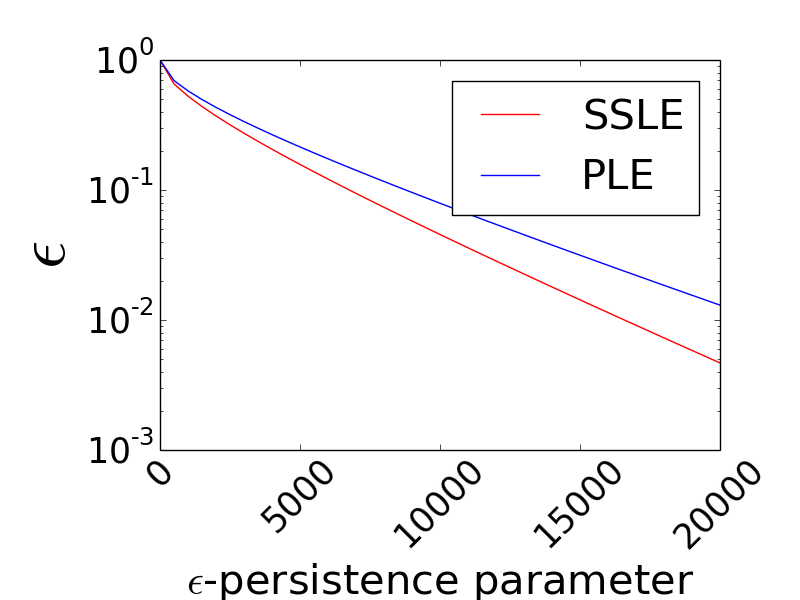}
\caption{$\alpha = 0.49$.}
\label{fig:49-adv}
     \end{subfigure}
     \hfill
     \centering
     \begin{subfigure}[b]{0.48\textwidth}
         \centering
 \includegraphics[width = \linewidth]{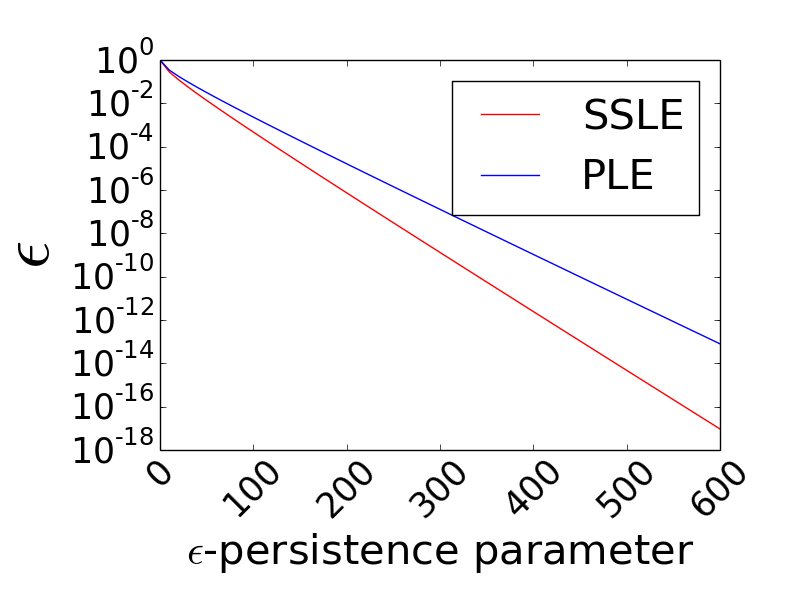}
\caption{$\alpha = 0.33$.}
\label{fig:33-adv}
     \end{subfigure}
     \hfill
     \begin{subfigure}[b]{0.48\textwidth}
         \centering
  \includegraphics[width = \linewidth]{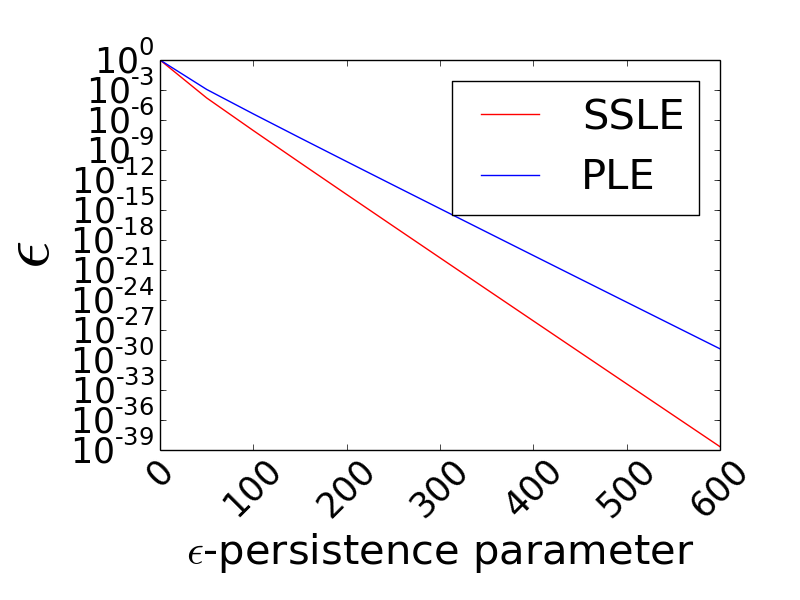}
\caption{$\alpha = 0.25$.}
\label{fig:25-adv}
     \end{subfigure}
     \hfill
     \centering
     \begin{subfigure}[b]{0.48\textwidth}
         \centering
 \includegraphics[width = \linewidth]{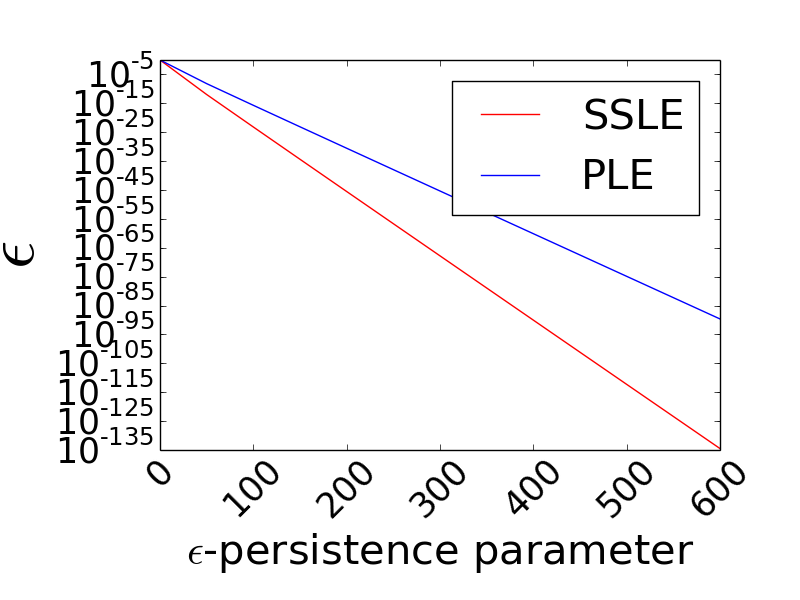}
\caption{$\alpha = 0.1$.}
\label{fig:33-adv}
     \end{subfigure}
\caption{$\epsilon-$persistence parameter for an adversary with power $\alpha$ (logarithmic scale).}
\label{fig:persistence}
\end{figure*}


\section{Grinding analysis}\label{sec:grinding}

In the previous section, we 
have assumed that the random beacon given as input to the leader election at each round came from a perfect source of randomness. Such an assumption can be achieved with a decentralized random beacon~\cite{syta2017scalable}, for example. 
However most PoS protocols rely on an internal random beacon, which is
based on the state of the chain. 
Such random beacons are vulnerable, to some extent, to grinding attacks where an adversary grinds through the block space in order to bias the randomness and find a value for which an unfair advantage can be extracted.
Bagaria et al.~\cite{nakamoto-pos} studied
the security of PoS protocols where the randomness is updated every $c$ blocks and the trade-offs between updating the randomness more frequently (and being more vulnerable to grinding) or less frequently (and being more predictable).
In this section, we study this specific case and 
compare the security guarantees of PoS blockchains against private attacks that use grinding in the PLE and SSLE case.
We start by presenting this new model in Section~\ref{sec:grinding-model} before moving on to the analysis in Section~\ref{sec:grinding-analysis}.


\subsection{Model}\label{sec:grinding-model}

\subsubsection{Random Beacon and Grinding}
We assume that the random beacon given as input to the leader election in each round is based on the data 
in the block created in the previous round.
A block created by different miners or a round skipped will all produce different beacons and thus election results in the next round but two blocks created by the same leader at the same round with different content (e.g., different transaction sets) will produce the same random beacon.

As a concrete, simplified, example, we assume that a random beacon $r_0$ is initialized using a multiparty computation protocol~\cite{beimel2010protocols,ben1985collective,cascudo2017scrape} at round 0. Randomness at round $i$ is then defined as 
$r_i = \sigma_{sk}(r_{i-1}\oplus i)$,
where $\sigma$ is the deterministic signature of
the player creating the block in round $i$. In practice a Verifiable Random Function~\cite{micali1999verifiable} will be used instead of a signature scheme but this does not matter for our analysis.
If no block is created at round $i-1$ because the elected leader was offline or no leader was elected (in the PLE case),
then the previous randomness is used, i.e.,
$r_i =\sigma_{sk}(r_{i-2}\oplus i)$.

This scenario corresponds to $c=1$ in ~\cite{nakamoto-pos} and it could be extended to the more general case where the randomness is updated every $c$ blocks instead (i.e.,
$r_i = \sigma_{sk}(r_{i-(i\%c)}\oplus i)$)
as Bagaria et al.~\cite{nakamoto-pos} showed.
In the case where $c>1$ 
the grinding is more limited but the protocol is more predictable which is an undesirable property (that we ignore in this work). As far as grinding is concerned $c=1$ corresponds to the worst-case scenario.

The grinding works as follows: whenever the adversary is elected leader, it can decide to publish its block or  skip the round, thus biasing the randomness.
Furthermore, in the PLE case (where the adversary can potentially produce more than one block per round), the adversary could decide which of its blocks to use, if it was elected more than once, or simply skip this round.
By trying out different combinations of blocks or skipping rounds, the adversary may find a particular chain where it is luckier and create a longer chain.

\subsubsection{Branching Random Walks}
Similarly to Bagaria et al.~\cite{nakamoto-pos}, we use the theory of branching random walks to study the problem of grinding. The rest of the assumptions (e.g., about synchronicity) are as
presented in Section~\ref{sec:model}.

We consider the following branching random walk that we note BRW. We first define it using standard branching process vocabulary, before explaining how it relates to grinding and blockchains.
As before, time is divided into discrete time steps.
An initial ancestor is located at the origin. 
At each time step a particle gives birth to a random number of children before dying.
Each child is randomly scattered through $\mathbb{N}$. In the next time-step, each child will give birth to their own children before dying.
The process repeats indefinitely. Each particle is independent of the others and verifies the following properties:

\begin{enumerate}
    \item Each particle has at least one child.
    The first child is located at the same position as its parent.
    \item $v$ denotes the position of a particle. The other children of the particle, if they exist, are located at position $v+1$.
    \item The number of children located at position $v+1$ follows a distribution noted Z.
\end{enumerate}



Such a process is illustrated in Figure~\ref{fig:brw}.
We will consider two different distributions for $Z$.
In the first case, $Z$ will follow a Bernoulli distribution of parameter $\alpha$. There is exactly one particle at height $v+1$ with probability $\alpha$ (and 0 otherwise). 
In this case, each particle will have one child with probability $1-\alpha$ and two children (one at position $v$ and one at position $v+1$) with probability $\alpha$. We denote by BRW1 the associated branching random walk.

In the second case, $Z$ follows a Poisson distribution of parameter $\alpha$. There will be $i$ particle in position $v+1$ with probability $\frac{e^{-\alpha} \alpha^{i}}{i!}$.
In this case, each particle has one child with probability $e^{-\alpha}$ and a total of $i>1$ children with probability $\frac{e^{-\alpha} \alpha^{i-1}}{(i-1)!}$.
We denote by BRW2 this branching random  walk.

Intuitively, a particle having exactly one child corresponds to the case where the adversary is not elected leader hence its chain does not increase by one and the child particle stays at the same position as its parent.
This happens with probability $1-\alpha$ in the SSLE case and $e^{-\alpha}$ in the PLE case.
A particle having more than one child corresponds to the case where the adversary was elected leader in that round, in which case its private chain increases by one and, analogously, the position of the other children increases by one.

In the case where the adversary has $m>1$ of its miners elected leader, then all of them will create a new block with a new random value and thus a new potential chain. Each particle therefore corresponds to a new chain that will grow independently of the other from then on.
The maximum position for BRW at time $n$ corresponds to the longest chain that the adversary has been able to create by grinding. 
We are interested in comparing this value with the
length of the honest chain. 

Honest players do not form a coalition and are not grinding, hence their chain evolves as a random walk that increases by one with probability $\delta_+$ and stays unchanged with probability $\delta_0 = 1- \delta_+$.
In the SSLE case, we have $\delta_+^{SSLE}=1-\alpha$
and in the PLE $\delta_+^{PLE}=1-e^{\alpha-1}$.
\begin{figure}[h]
\includegraphics[width = \linewidth]{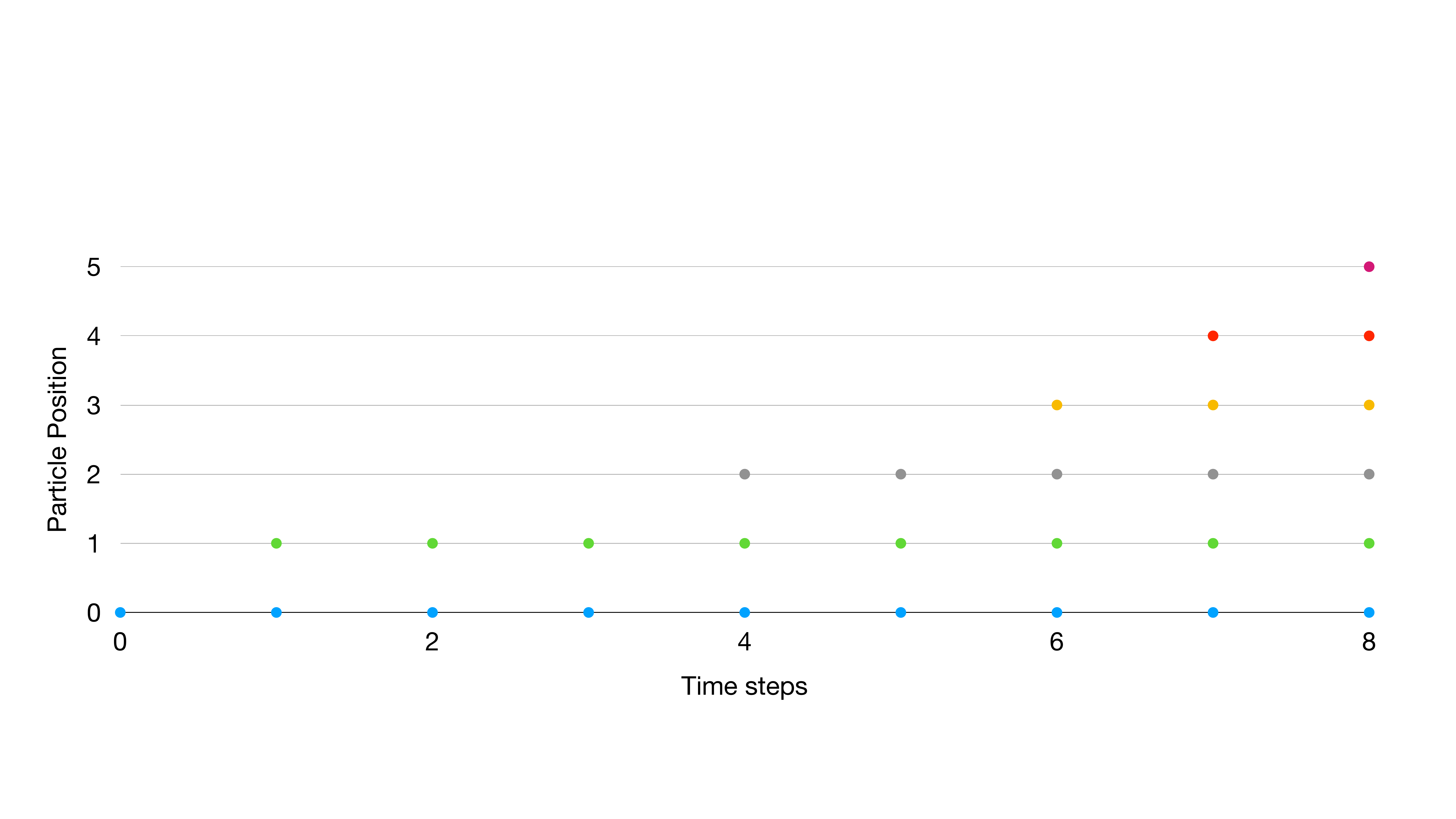}
\caption{Example of a branching random walk. There may exist more than one particle at each position.}
\label{fig:brw}
\end{figure}
Let $M_i$ denote the maximum position of the generic branching random walk BRW described above at time $i\in\mathbb{N}$.
When $Z$ is specified to be a Bernoulli distribution of parameter $\alpha$ we will write this process $M^{SSLE}(\alpha)$ and when $Z$ is a Poisson distribution of parameter $\alpha$, we will write the corresponding stochastic process $M^{PLE}(\alpha)$.
Similarly, we denote by $S_i$ the position of the generic honest random walk at time $i$ and will specify $S_i^{SSLE}(\alpha)$ or $S_i^{PLE}(\alpha)$ when needed.

The SSLE and PLE grinding games are then defined as follows. 
\begin{definition}[$(\alpha,L)$-SSLE Grinding  Game]
For $M^{SSLE}(\alpha)$ and $S^{SSLE}(\alpha)$
as defined above,
we say that the adversary wins the SSLE grinding game of length $L$ and power $\alpha$ if at timestep $L$, $M_L^{SSLE}(\alpha)\ge S_L^{SSLE}(\alpha)$.
\end{definition}

\begin{definition}[$(\alpha,L)$-PLE Grinding  Game]
For $M^{PLE}(\alpha)$ and $S^{PLE}(\alpha)$
as defined above,
we
say that the adversary wins the PLE grinding game of length $L$ and power $\alpha$ if at timestep $L$,
$M_L^{PLE}(\alpha)\ge S_L^{PLE}(\alpha)$.
\end{definition}

As before, we will sometimes refer to simply \emph{the grinding game} when talking about the generic processes $M$ and $S$.


We are interested
in comparing the persistence parameter for SSLE compared to PLE and, hence, comparing the probabilities of winning the SSLE grinding game and the PLE grinding game.

Additionally, we are interested in the security threshold of the two grinding games. In the non-grinding case, we already know that the protocol is secure against private attacks if and only if $\alpha<0.5$ for obvious reasons (i.e., the 
$\epsilon-$persistence parameter exists and is finite for every $\epsilon\in (0,1)$ and $\alpha<0.5$).
With the grinding attack, an adversary may win an unfair advantage in the private attack and take over the honest chain of any length even with less than half of the stake.
For the PLE case, Bagaria et al.~\cite{nakamoto-pos}
proved that the protocol is secure in the case of grinding if and only if $\alpha<1/(1+e)$ for a network delay $\Delta=0$.
In the next section, we first derive the
security threshold, i.e., the
biggest value $\alpha_0$ such that there exists a finite $\epsilon-$persistence parameter for every $\epsilon\in(0,1)$.
We then look at the probabilities of winning the game of length $n$ for an adversary with power less than $\alpha_0$ and again, compare the two cases.

Before moving on to the analysis, we make one important remark in the SSLE case with grinding.
Since the randomness on the honest and adversarial chains are now different, the leader elections on each of these chains become independent. This is unlike the non-grinding case, where we had $h_n = 1- a_n$.
The SSLE thus does not act like an SSLE anymore. There could be more than one winner per round, each on a different chain, or no block created at all in some rounds (e.g., if in the adversarial chain the leader elected is honest and in the honest chain the leader elected is adversarial).
The SSLE thus acts more like a probabilistic leader election where one leader is elected with probability $\alpha$ at each round on each adversarial chain, and one leader is elected with probability $1-\alpha$ on the honest chain. Unlike with the PLE private game, however, there can be at most one leader elected per round on each chain.

\subsection{Analysis}\label{sec:grinding-analysis}
\subsubsection{Security Threshold}

We start by defining the security threshold of the grinding game. Intuitively, the security threshold captures the threshold for which the protocol is secure after a certain length. In other words, the 
probability of winning the game
can be made as small as desired by choosing a long enough length.

\begin{definition}[Security threshold]
If there exists $\alpha_0\in\; (0,1)$ such that for every $\alpha>\alpha_0$, 
$M_i(\alpha)> S_i(\alpha)$ asymptotically a.s. and for $\alpha<\alpha_0$,
$S_i(\alpha)> M_i(\alpha)$ asymptotically a.s., we call such $\alpha_0$ the \emph{security threshold} of the grinding game.
\end{definition}

We note that the above definition does not say anything about the behaviour of the game with power exactly $\alpha_0$.
In order to compute the security threshold in both cases, we use results from Biggins~\cite[Section 6]{biggins1976first} that we adapt to our discrete time model.
Specifically, Biggins considers a continuous model where particles are scattered through $\mathbb{R}$ and particles 
are of different types, whereas we only have one type of particle that is scattered through $\mathbb{N}$.
Additionally, he considers the minimum position of the branching random walk rather than the maximum. With this mind, we can directly derive the following result:
\begin{align*}
    &\lim_{n\to\infty} \frac{M_n}{n} \rightarrow -\gamma \text{ a.s. where $\gamma$ is defined as below:}\\
&\phi(\theta) := \mathbb{E}[1+e^{\theta  }Z] \\
   &   \mu(a) := \inf \{ e^{\theta a} \phi(\theta): \theta \ge 0 \} \\
   & \gamma := \inf \{a: \mu(a) \ge 1\}
\end{align*}

Similarly following the law of large numbers (applied to the Binomial distribution), we have $\lim_{n\to\infty}~\frac{S_n}{n}\rightarrow \delta_{+}$ a.s.
Hence, we conclude that $M_n > S_n$ a.s. asymptotically if $-\gamma > \delta_+$
and $M_n < S_n$ a.s. asymptotically if $-\gamma < \delta_+$.
As a consequence, the security threshold corresponds to the case $-\gamma = \delta_+$.
Based on this observation, we prove the following theorem.

\begin{theorem}
The security threshold of the SSLE, resp. PLE, grinding games are: $\alpha^{SSLE}\simeq 0.36$ and
$\alpha^{PLE}\simeq 0.265$.
\end{theorem}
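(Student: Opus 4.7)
The plan is to apply Biggins' asymptotic result (quoted just above the statement) to each of the two branching random walks and to compare its rate $-\gamma$ with the linear growth rate $\delta_+$ of the honest chain. Since $M_n/n \to -\gamma$ and $S_n/n \to \delta_+$ almost surely, the security threshold is exactly the value of $\alpha$ at which $-\gamma = \delta_+$; and by the definition $\gamma = \inf\{a: \mu(a)\ge 1\}$ together with continuity of $\mu$ in $a$, this crossing is captured by the single scalar equation $\mu(-\delta_+) = 1$, which can be written out separately for each game.

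First I would compute $\phi$. Each particle has offspring at positions $0$ and $1$ with multiplicities $1$ and $Z$ respectively, so $\phi$ depends on $Z$ only through its mean and equals $\phi(\theta) = 1 + \alpha e^\theta$ uniformly in both the Bernoulli and Poisson cases. A one-variable optimization of $e^{\theta a}(1+\alpha e^\theta)$ in $\theta$ then yields the minimizer $e^{\theta^\ast} = -a/(\alpha(1+a))$, and substituting back and specializing to $a = -\delta_+$ simplifies $\mu(-\delta_+)$ to $e^{-\theta^\ast \delta_+}/(1-\delta_+)$. Setting this equal to $1$ and eliminating $\theta^\ast$ reduces the threshold condition to $\ln\alpha = \ln\delta_+ + \frac{(1-\delta_+)\ln(1-\delta_+)}{\delta_+}$.

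Next I would specialize $\delta_+$ to the two games. In the SSLE game, exactly one leader is elected per round and an honest leader appears with probability $1-\alpha$, so $\delta_+^{SSLE} = 1-\alpha$ and the threshold equation collapses to $(1-2\alpha)\ln\alpha = (1-\alpha)\ln(1-\alpha)$, whose unique root in $(0,1/2)$ is $\alpha \simeq 0.36$. In the PLE game, the honest and adversarial chains now carry independent randomness (as noted at the end of the grinding model section), so the probability of at least one honest leader in a round is $1 - e^{\alpha-1}$, giving $\delta_+^{PLE} = 1 - e^{\alpha-1}$; plugging this in (and using $1-\delta_+^{PLE} = e^{\alpha-1}$) produces a transcendental equation whose numerical root is $\alpha \simeq 0.265$. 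Both values can be verified by direct substitution.

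The main obstacle is justifying that the computed root is genuinely the threshold in the strict dichotomous sense of the definition, rather than merely a critical point of the defining equation. For this I would argue that $\alpha \mapsto -\gamma(\alpha)$ is strictly increasing (a larger adversarial fraction accelerates the branching random walk, as can be read off from the monotonicity of $\mu$ in $\alpha$) and that $\alpha \mapsto \delta_+(\alpha)$ is strictly decreasing in both games; the difference $-\gamma - \delta_+$ is therefore strictly monotone in $\alpha$ and changes sign exactly once, at the value identified above. On either side of the crossing the sign of $-\gamma - \delta_+$ forces the almost-sure asymptotic comparison of $M_n$ and $S_n$ to go one way or the other, yielding the required threshold behaviour.
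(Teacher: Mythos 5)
Your proposal is correct and follows essentially the same route as the paper: both apply Biggins' result with $\phi(\theta)=1+\alpha e^{\theta}$ (identical for the Bernoulli and Poisson offspring laws), locate the threshold at $-\gamma=\delta_+$ with $\delta_+^{SSLE}=1-\alpha$ and $\delta_+^{PLE}=1-e^{\alpha-1}$, and solve the resulting transcendental equations numerically; your logarithmic forms, e.g.\ $(1-2\alpha)\ln\alpha=(1-\alpha)\ln(1-\alpha)$ in the SSLE case, are algebraically equivalent to the paper's equations and yield the same roots $\simeq 0.36$ and $\simeq 0.265$. The only gloss is that collapsing the computation of $\gamma$ into the single equation $\mu(-\delta_+)=1$ requires uniqueness of the root of $\mu(a)=1$ on $(-1,0)$ --- which the paper establishes by showing $\mu$ is unimodal there with limits $\alpha$ at $-1$ and $1$ at $0$ --- and not continuity alone, though your added monotonicity-in-$\alpha$ argument for the sign change of $-\gamma-\delta_+$ (left implicit in the paper) effectively covers the dichotomy.
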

\begin{proof}
In order to prove the theorem, we compute the value of $\gamma$. We start by computing $\phi$.
\begin{align*}
     \phi &= \mathbb{E}[1+e^{\theta} Z]\\
      &= 1+e^{\theta} \alpha
\end{align*}

We first note that this expression is independent of whether the number of blocks (or children at position 1) follows a Bernoulli or Poisson distribution since they both have the same expectation.
This is in itself an interesting observation since it means that, asymptotically, both processes $M_i^{SSLE}$
and $M_i^{PLE}$ behave similarly (although this is not the case for $S_i^{SSLE}$ and $S_i^{PLE}$).

We now move on to compute $\mu$ and $\gamma$, which will be equal for both branching random walks since they only depend on $\phi$.

First, we compute $\mu(a)= \inf \{e^{\theta a}\phi(\theta):\theta\ge 0\}$.
We have $e^{\theta a}\phi(\theta)    = e^{\theta a}(1+e^{\theta} \alpha)$, hence, $\mu(a) = 1+\alpha$ for $a\ge 0$, $\mu(a) = 0$ for $a\le-1$, so it remains to compute $\mu(a)$ for $-1<a<0$.
We have $\frac{\partial}{\partial \theta}(e^{\theta a}\phi(\theta))= e^{\theta a}(a+\alpha(a+1)e^\theta)\ge 0 \Leftrightarrow \theta \ge \ln(-\frac{a}{\alpha(a+1)})$.
 Hence $\theta\mapsto e^{\theta a}\phi(\theta)$ is decreasing up until $\theta_0 = \ln(-\frac{a}{\alpha(a+1)})$ and increasing after this.
We therefore have that for $-1<a<0$, $\mu(a)= e^{\theta_0 a}\phi(\theta_0)= (\frac{-a}{\alpha (a+1)})^a\frac{1}{a+1}$.
  \[
    \mu(a) = \left\{\begin{array}{lr}
        1+\alpha, & \text{for } a\ge 0\\
        \left(\frac{-a}{\alpha (a+1)}\right)^a\frac{1}{a+1}, & \text{for } -1<a<0\\
        0, & \text{for } a<-1
        \end{array}
        \right.
  \]

 Next we compute $\gamma = \inf \{ a: \mu(a)\ge 1 \}$.
 We denote $\zeta(a) = (\frac{-a}{\alpha (a+1)})^a\frac{1}{a+1}$ for $-1<a<0$.
 We have $\frac{\partial}{\partial a}\zeta = \frac{1}{a+1}(\frac{-a}{\alpha(a+1)})^a\log(\frac{-a}{\alpha(a+1)})$ and $\frac{\partial}{\partial a}\zeta \ge 0 \Leftrightarrow a \le -\frac{\alpha}{\alpha +1}$. Hence $\zeta$ is first increasing then decreasing.
 Furthermore, we have that $\lim_{a\to 0}\zeta(a)= 1$ and $\lim_{a\to -1}\zeta(a)= \alpha$ hence 
 $\zeta$ reaches one exactly once and $\gamma$ is the
 unique solution to $\zeta(a)=1$ for $-1<a<0$.
We thus know that gamma solves the following equation:

\[\left(\frac{-\gamma}{\alpha (\gamma+1)}\right)^\gamma\frac{1}{\gamma+1}=1\]

In the SSLE case, the threshold corresponds to the case where
$-\gamma = 1-\alpha^{SSLE} $ and, in the PLE case, the threshold corresponds to the case $-\gamma = 1- e^{\alpha^{PLE} -1}$.
Hence $\alpha^{SSLE}$ satisfies the following equation:
\[
\left(\frac{1-\alpha^{SSLE}}{(\alpha^{SSLE})^2}\right)^{\alpha^{SSLE}-1}=\alpha^{SSLE}
\]

The solution can be computed numerically giving $\alpha^{SSLE} \simeq 0.360$.

$\alpha^{PLE}$ on the other hand, satisfies the following equation:

\[
\left(\frac{1-e^{\alpha^{PLE}-1}}{\alpha e^{\alpha^{PLE}-1}}\right)^{e^{\alpha^{PLE}-1}-1}=e^{\alpha^{PLE}-1}\]

which can be solved numerically giving $\alpha^{PLE} \simeq 0.265$.
\end{proof}

The above theorem shows that SSLE significantly increases the security threshold in the grinding private attack.
We also remark that the threshold found in the PLE case is similar to the one found by Bagaria et al.~\cite{nakamoto-pos} ($1/(1+e)$), although there is a slight difference between the models. 
The reference considers a continuous-time model (i.e., the slot duration $\delta$ is very small) and a delay of propagation of zero. This means that, for example, if two honest leaders were to find a block at $\delta$ milliseconds of interval, where $\delta$ is very small, then these two blocks will be added to the honest chain even though in practice they were found at the same time.
In our model, we consider a discrete time model and a synchronous network with a strictly positive delay, and so two blocks found in the same round cannot be added to the same chain.

\begin{figure*}
     \centering
     \begin{subfigure}[b]{0.48\linewidth}
         \centering
    \includegraphics[width = \linewidth]{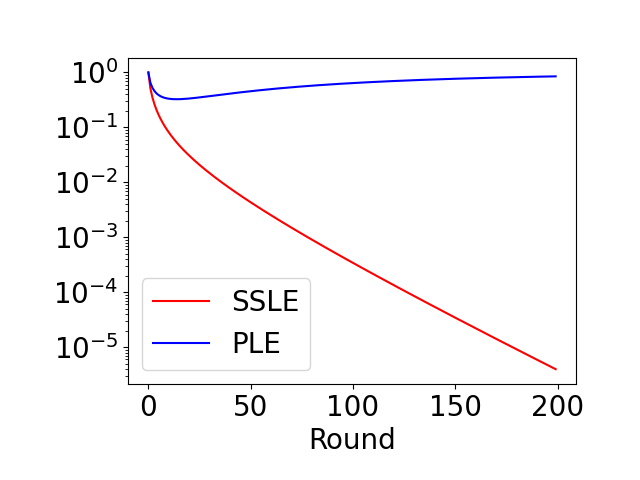}
    \caption{$\alpha = 0.3$}
    \label{fig:grind49}
     \end{subfigure}
     \hfill
     \begin{subfigure}[b]{0.48\linewidth}
         \centering
    \includegraphics[width = \linewidth]{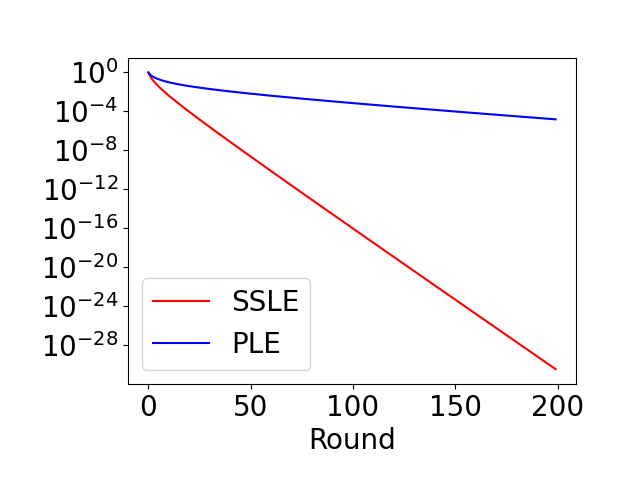}
    \caption{$\alpha = 0.2$}
    \label{fig:grind33}
     \end{subfigure}
        \hfill
             \begin{subfigure}[b]{0.48\linewidth}
         \centering
    \includegraphics[width = \linewidth]{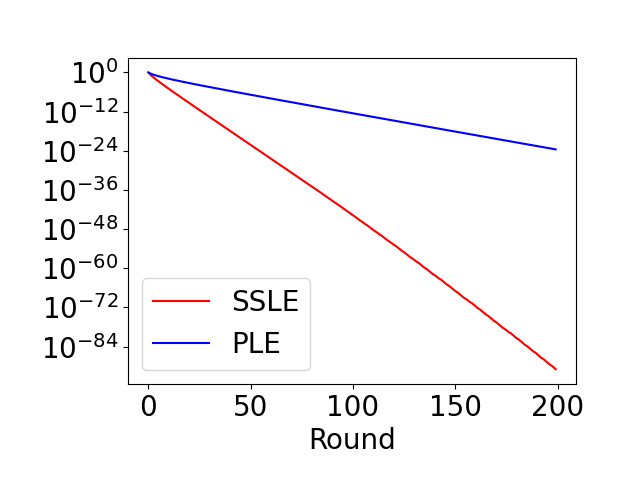}
    \caption{$\alpha = 0.1$}
    \label{fig:grind33}
     \end{subfigure}
     \hfill
     \caption{Probability that the adversarial chain is greater or equal than the honest chain when grinding for a chain of length $n$}
     \label{fig:grinding}
\end{figure*}

\subsubsection{Persistence parameter}

\begin{figure*}
\centering
     \begin{subfigure}[b]{0.48\textwidth}
         \centering
  \includegraphics[width = \linewidth]{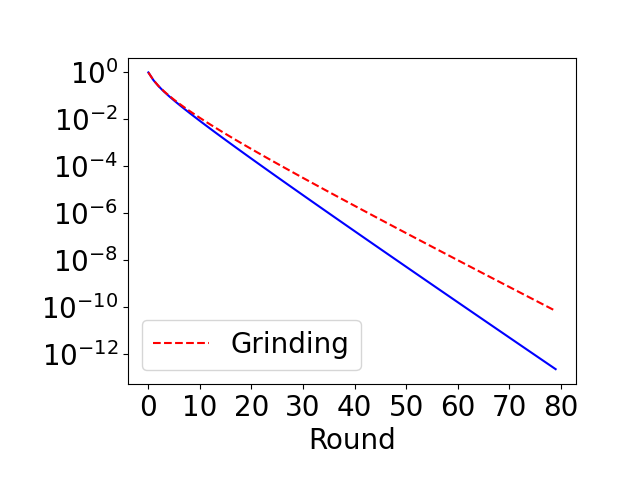}
\caption{PLE with and without grinding: $\alpha = 0.1$.}
\label{fig:grinding-ple-10}
     \end{subfigure}
     \hfill
     \centering
     \begin{subfigure}[b]{0.48\textwidth}
         \centering
 \includegraphics[width = \linewidth]{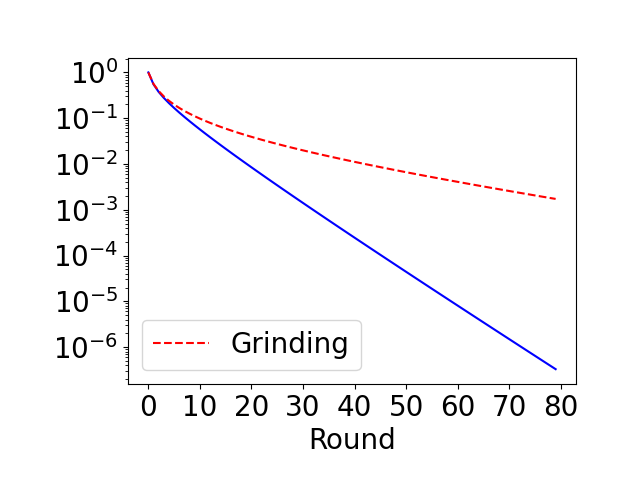}
\caption{PLE with and without grinding: $\alpha = 0.2$.}
\label{fig:grinding-ple-20}
     \end{subfigure}
     \hfill
     \begin{subfigure}[b]{0.48\textwidth}
         \centering
  \includegraphics[width = \linewidth]{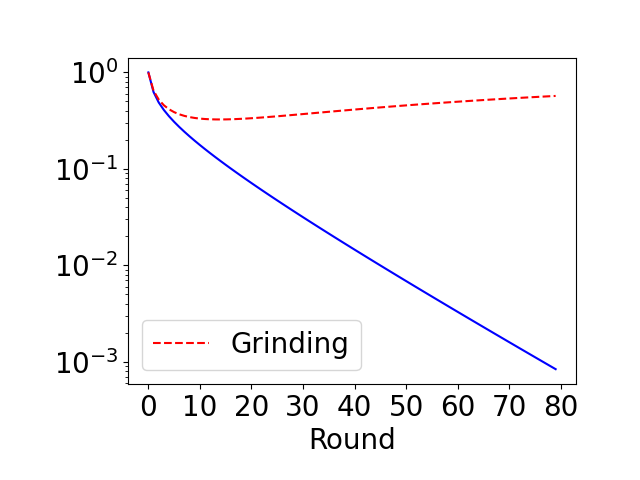}
\caption{PLE with and without grinding: $\alpha = 0.3$.}
\label{fig:grinding-ple-30}
     \end{subfigure}
     \hfill
     \centering
     \begin{subfigure}[b]{0.48\textwidth}
         \centering
 \includegraphics[width = \linewidth]{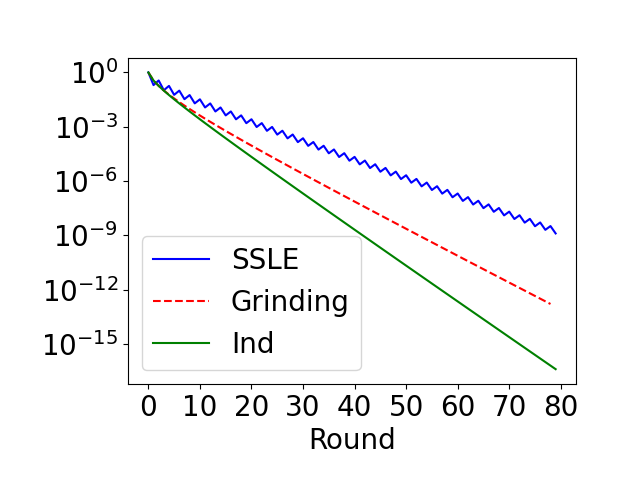}
\caption{SSLE: $\alpha = 0.1$.}
\label{fig:grinding-ssle-10}
     \end{subfigure}
          \hfill
     \centering
     \begin{subfigure}[b]{0.48\textwidth}
         \centering
 \includegraphics[width = \linewidth]{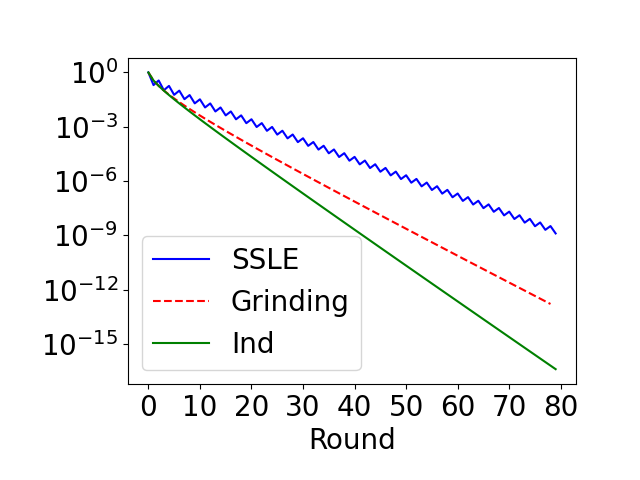}
\caption{SSLE:$\alpha = 0.2$.}
\label{fig:grinding-ssle-20}
     \end{subfigure}     \hfill
     \centering
     \begin{subfigure}[b]{0.48\textwidth}
         \centering
 \includegraphics[width = \linewidth]{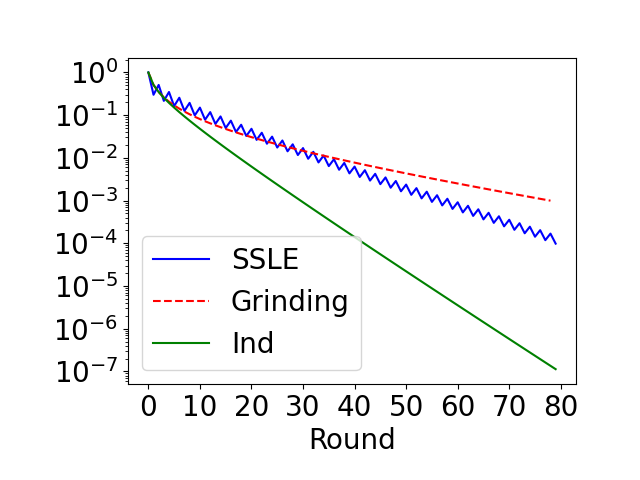}
\caption{SSLE: $\alpha = 0.3$.}
\label{fig:grinding-ssle-30}
     \end{subfigure}
\caption{Probability of winning the simple or grinding private games of length exactly $n$ for different values of $\alpha$ (logarithmic scale).
``Ind'' corresponds to the Independent SSLE private game.}
\label{fig:grinding-vs-non-grinding}
\end{figure*}
In the previous section, we computed the security threshold of the SSLE and PLE grinding games. We therefore know that, for an adversary below that power, we can find a length $L$ such that the probability of winning the grinding game of parameter $(\alpha,L)$ is as small as desired.
However, our analysis was only asymptotic and did not give any information about the persistence parameter or the behaviour of the game for a shorter time period.
In this section, we study the probability of winning the SSLE and PLE grinding games of length $n$ in order to give an estimate of the $\epsilon-$persistence parameter.

We are interested in the variable $(D_i=M_i-S_i)_{i\in\mathbb{N}}$ and especially the event
$M_i-S_i\ge 0$ that corresponds to the adversarial chain being longer or equal than the honest chain and hence the adversary winning the grinding game.

We denote $a_{i,j}=P(M_i< j)$.
Since, by definition, $M_0=0$, we have $a_{0,j}=1$ for $j> 0$.
We now define the following recursive formula for $a_{i,j}$:
{\small 
  \setlength{\abovedisplayskip}{6pt}
  \setlength{\belowdisplayskip}{\abovedisplayskip}
  \setlength{\abovedisplayshortskip}{0pt}
  \setlength{\belowdisplayshortskip}{3pt}
  \begin{align*}
          a_{i,j} &=  \sum_{m=0}^{\infty}P(Z=m)\Pr[M_i< j |  Z=m]\\
    &=  a_{i-1,j}\sum_{m=0}^{\infty}P(Z=m)(a_{i-1,j-1})^m
  \end{align*}
}%
This equality is explained as follows.
We start at time $0$.
We denote $m$ the number of children at position 1 of the initial particle (i.e., the total number of children is $m+1$).
The first child stays at position 0, whereas the other $m$ children will increase position by one.
All of the children generate independent processes similar to their ancestor, except starting at $(i+1,0)$ for the first child, and $(i+1,1)$ for the other $m$ children.
The process $M$ will not reach $j$ at step $i$ if and only if none of the processes engendered 
by the children of the original particle reach $j$. Since all the processes are independent, the probability that the process engendered by the first child never reaches $j$ is $a_{i-1,j}$. For the rest of the $m$ children,
this probability is $a_{i-1,j-1}$.
Conditional on the particle having $m+1$ children, the probability that $M$ does not reaches $j$ by time $i$ is equal to $a_{i-1,j}(a_{i-1,j-1})^{m}$.

Adapting the above probabilities to the SSLE grinding game yields the following:
\begin{align*}
    a_{i,j}^{SSLE}= a_{i-1,j}^{SSLE}(1-\alpha + \alpha a_{i-1,j-1}^{SSLE})
\end{align*}

Whereas in the PLE game, it becomes:
\begin{align*}
    a_{i,j}^{PLE}=  a_{i-1,j}^{PLE}e^{\alpha(a_{i-1,j-1}^{PLE}-1)}
\end{align*}

We note we have found a recursive formula for $a_{i,j}$ that depends on $(a_{i-1,j-1},a_{i-1,j})$.

Next, we are interested in
$b_{i}=\Pr[Mi-S_i\ge0]$.
We have:
\begin{align*}
   b_{i}&= \Pr[Mi-S_i\ge0]\\
   &= \sum_{s=0}^{i}\Pr[S_i=s]\times\Pr[M_i\ge s]\\
   &= \sum_{s=0}^{i}\Pr[S_i=s]\times(1-a_{i,s})\\
   &= \sum_{s=0}^{i}\Bin(\delta_+,i,s)\times(1-a_{i,s})
\end{align*}

 We plot this probability for different values of $\alpha$ and $n$ in Figure~\ref{fig:grinding}.
Here, we quickly remark that this probability does not allow us to find
the exact $\epsilon-$persistence parameter
as it is not the probability
that the adversary violates the persistence of the blockchain for any length greater than $n$ but the probability that the adversary wins the game of length exactly $n$. 
In theory, the probability that the adversary wins the game for any $n\ge n_0$ is slightly bigger than the probability we computed as there is always a small chance that an adversary that did not win at length $n$ could catch up in the future.
In the grinding case, this probability is much more complex to compute than in the previous section. However, the probability that the adversary wins the grinding game of length exactly $n_0$ still provides an interesting proxy measure for the security of the underlying protocol.


\paragraph{Results Interpretation}
In Figure~\ref{fig:grinding} we see that SSLE performs consistently better than PLE in the sense that the probability of winning the grinding game is smaller for SSLE than for PLE.
 We also notice, as before, that for $n$ big enough,
 this probability can be approximated as $e^{-an}$.
 Using the same method as before, we can find that SSLE 
 reduces the persistence parameter by roughly 70\% in the case of a 10\% adversary and 80\% for a 20\% adversary.
 In this case, the improvement is even more drastic than in the private game. Unlike the private game, however, the reduction is more noticeable for a 20\% than for a 10\% adversary.
 
It is also interesting to compare the probabilities of winning the private game vs winning the grinding game. We plot these probabilities in Figure~\ref{fig:grinding-vs-non-grinding}. 
In the simple (i.e., non-grinding) case, we compute the probability that the gap is positive instead of using the probability in Section~\ref{sec:analysis}, as this matches the probability we have computed in the grinding case.
Intuitively, grinding should increase the probability of winning the game of length $n$, which is what we observe for PLE in Figure~\ref{fig:grinding-ple-10}, \ref{fig:grinding-ple-20} and~\ref{fig:grinding-ple-30}.
However, in the SSLE case, we observe the opposite for a 10 and 20\% adversary.
As we have discussed before, in the case of grinding, the SSLE game does not act anymore as a single secret leader election since the adversarial and honest chains are now independent as they operate on different random beacons.
The SSLE grinding game is thus more similar to the PLE private game than to the SSLE private game, except that the probabilities $p_a$ and $p_h$ should be adapted accordingly. 
We now define the following game:
\begin{definition}[$(L,\alpha)$-idependent SSLE Private Game]
The independent SSLE private game with parameters $(L,\alpha)$ is defined as follows:
at each round $n\in [1,\dots,L ]$
a number $a_n$ of adversarial leaders is selected from a  Bernoulli distribution of parameter $\alpha$
and a number $h_n$  of honest leaders is selected from an independent  Bernoulli distribution of parameter $1-\alpha$.
We say that the adversary wins the PLE private game of length $L$ and power $\alpha$ if the number of rounds with non-zero adversarial leaders is greater or equal than the number of rounds with non-zero honest leaders, i.e.:
$$|\{n \in [1,\dots,L ]: a_n=1\}|\ge |\{n \in [1,\dots,L ]: h_n=1\}|.$$

\end{definition}
This game is equivalent to a PLE private game, except that we now have $p_a=\Pr[a_n>0]\times\Pr[h_n=0]=\alpha^2$ and similarly $p_h=(1-\alpha)^2$ and $p_0 = 2\alpha(1-\alpha)$.
We plot the probabilities of winning the independent SSLE private game and compare them to the SSLE grinding game in Figure~\ref{fig:grinding-ssle-10}, \ref{fig:grinding-ssle-20} and~\ref{fig:grinding-ssle-30}.
We indeed notice that the independent SSLE game performs much better than the grinding game but, surprisingly, also better than the SSLE game.
The difference between the independent SSLE and SSLE private games can be explained by the fact that the adversary in the independent game is much less likely to be elected sole leader ($\alpha^2$ vs $\alpha$) and, hence, the gap in this case increases less often. The difference between the independent SSLE and PLE is also explained similarly: the probability of the gap increase goes from $e^{1-\alpha}-e^{-1}$ to $(1-\alpha)^2$.
Although the persistence parameter is smaller in this case, it is also expected that there will be rounds with no winner as well as more natural forks (unlike in the SSLE case).

\section{Conclusion and Future work}
In this work, we have performed a comparison of private attacks against longest-chain PoS protocols that use SSLE and PLE. 
We have found that the persistence parameter under this specific attack is reduced significantly when using SSLE, by around 25\% against a 25 or 33\% adversary.
We also found that the security threshold against 
grinding private attacks is higher ($\simeq 0.36$) in the SSLE case than in the PLE case ($\simeq 0.26$).
These results are encouraging and should help convince real-world system designers to make the switch to SSLE in their PoS blockchains.
For future work, it will be interesting to compare the results based on other attacks (e.g., balance attack~\cite{nakamoto-pos} or general case), as well as to relax assumptions such as the synchronous network or static adversary. 

\section*{Acknowledgments}
The authors would like to thank Jorge Soares for his valuable feedback on this paper. D.C. was supported by the MIUR grant `Dipartimenti di Eccellenza 2018-2022' (E11G18000350001).

\bibliographystyle{plain}  
\bibliography{ref}       
\end{document}